\definecolor{dgreen}{rgb}{0,0.5,0}
\newtheorem{example}{Example}[section]
\newtheorem{assumption}{Assumption}
\newtheorem{theorem}{Theorem}
\newtheorem{lemma}{Lemma}
\newtheorem{remark}{Remark}[section]
\newenvironment{proof}[1][Proof]{\noindent \textbf{#1.} }{\  \rule{0.5em}{0.5em}}
\begin{document}

\title{\bf Binary choice logit models with general fixed effects for panel and network data\thanks{
This research was
supported by the Gregory C. Chow Econometric Research Program at Princeton
University, by  the Danish National Research Foundation grant number DNRF186 to the Aarhus Center for Econometrics (ACE), and by  the
European Research Council grant ERC-2018-CoG-819086-PANEDA.}}
\author{\setcounter{footnote}{2}
Kevin Dano%
\thanks{Princeton University, \texttt{kdano@princeton.edu} }
\and 
Bo E. Honor{\'e}%
\thanks{Princeton University, \texttt{honore@princeton.edu} }
\and 
 Martin Weidner%
\thanks{
University of Oxford, \texttt{martin.weidner@economics.ox.ac.uk} } }
\date{August 2025}

\maketitle
\thispagestyle{empty}
\setcounter{page}{0}

\bigskip
\begin{abstract}
\noindent
This paper systematically analyzes and reviews identification strategies for binary choice logit models with fixed effects in panel and network data settings. We examine both static and dynamic models with general fixed-effect structures, including individual effects, time trends, and two-way or dyadic effects. A key challenge is the incidental parameter problem, which arises from the increasing number of fixed effects as the sample size grows. We explore two main strategies for eliminating nuisance parameters: conditional likelihood methods, which remove fixed effects by conditioning on sufficient statistics, and moment-based methods, which derive fixed-effect-free moment conditions. We demonstrate how these approaches apply to a variety of models, summarizing key findings from the literature while also presenting new examples and new results.
\end{abstract}

\newpage
 
\section{Introduction}

Binary outcome models with fixed effects have a long history in economics and related fields. They allow for flexible modeling of individual decisions while accounting for unobserved heterogeneity. These models arise naturally in two types of data structures: traditional panel data, where individuals or firms are observed over time, and network or dyadic data, where observations correspond to interactions between pairs of units. In both cases, researchers often wish to estimate the effect of observed covariates on a binary outcome, while allowing for unit-specific or pair-specific unobserved components. Examples include individual labor market choices, firm entry decisions, and the formation of social or economic networks.

A central challenge in these models is the treatment of fixed effects, which capture unobserved heterogeneity. When treated as parameters to be estimated, fixed effects can lead to the incidental parameter problem, first identified by \cite{neyman1948consistent}. This issue arises because the number of fixed effects grows with the sample size.
If the number of observations per parameter increases slowly, as in certain network models or panel data settings with increasing number of time periods, estimators may be asymptotically biased. In such cases, bias correction methods can be applied; see, for example, \cite{HahnNewey2004}, and the surveys by \cite{ArellanoHahn2007} and \cite{FernandezValWeidner2018}.
On the other hand, when the number of parameters grows proportionally with the number of observations, as in standard panel data models with a fixed number of time periods, estimators are often inconsistent. In these settings, the most effective strategy is to eliminate the fixed effects.\footnote{In an alternative approach, \cite{HigginsJochmans2024} show that in parametric settings like the one considered in this paper, it is possible to use the bootstrap to do valid inference on the common parameters using the inconsistent and biased estimator that treats the fixed effects as parameters to be estimated.}

In this paper, we summarize and extend two existing methods for estimating binary choice logit models with fixed effects by eliminating them from the estimation problem. Our focus is on both static and dynamic models, allowing for very general fixed effect structures. We discuss models for standard panel data, as well as dyadic models that arise in the study of networks.

Two main general approaches have been developed to address the incidental parameter problem. The first is based on conditional likelihood and requires a parametric model conditional on the unobservable fixed effects. This method eliminates the fixed effects by conditioning on sufficient statistics. It was first introduced by \cite{rasch1960studies} in the context of educational testing and later adapted to econometric applications by \cite{chamberlain_analysis_1980}. In binary logit models, the fixed effects drop out of the likelihood when we condition on certain linear combinations of the outcomes. This approach has been extended to more complex fixed effect structures, including two-way fixed effects in dyadic data \citep{charbonneau2017multiple} and network formation models \citep{graham2017econometric}. Conditional likelihood methods are attractive because they provide a way to estimate structural parameters without modeling the distribution of the unobserved effects. However, they are only available in models where sufficient statistics can be found in closed form.

The second approach relies on moment conditions that do not depend on the fixed effects.\footnote{In this paper we focus on moment \textsl{equality} conditions. Moment \textsl{inequalities} form the basis for the semiparametric approach in \cite{Manski87}, and have also been used to construct and estimate identified sets for the structural parameters. See, for example, \cite{PakesPorterShepardCalderWang2022}.} This idea is sometimes called functional differencing. It was formalized by \cite{bonhomme2012functional} and applied to discrete choice models by 
\cite{kitazawa2021transformations} and 
\cite{honore2024dynamic}. The basic idea is to construct functions of the data whose conditional expectation is zero for all values of the fixed effects. These functions then define valid moment conditions for the structural parameters. This approach has also been used to develop estimators in dynamic logit models with general fixed effects by \cite{dano2023transition}. Moment-based methods can be used in a wider range of models than conditional likelihood. They are especially useful in dynamic models or in models with more complicated fixed effect structures.

This paper reviews both approaches in the context of binary choice logit models. We present the underlying identification ideas in a unified framework and show how they apply to standard examples. These include static panel models with individual fixed effects, models with heterogeneous time trends, two-way fixed effects, and network formation settings. For dynamic models, we discuss how moment conditions can be constructed under general assumptions, and how these conditions relate to the number of fixed effects in the model. We also give examples from dynamic panel and dynamic network settings. We focus on the logit model, rather than, say, the probit model or a semiparametric version, because it is known from \cite{chamberlain2010binary} that in the simplest case of a static binary response panel data model with two time periods, regular  root-$n$ consistent estimation of the common parameters is only possible in the logit model.\footnote{For a discussion of what can be learned about the common parameters in dynamic binary response models with panel data in a more semiparametric setting, see for example \cite{KHAN2023105515}.}

  The literature has long employed both conditional likelihood and moment condition approaches to estimate common parameters in fixed effects variants of other standard nonlinear models. For instance, \cite{HausmanHallGriliches84} applied conditional likelihood methods to panel data versions of Poisson regression models. Similarly, \cite{Honore92,HONORE199335} and \cite{Hu2002} developed estimators for static and dynamic censored and truncated regression models using moment functions. Additionally, \cite{Chamberlain_1992} and \cite{Wooldridge_1997} utilized moment conditions in a class of static and dynamic multiplicative models. A hybrid approach is taken by \cite{Kyriazidou97, Kyriazidou01}, who combined conditional likelihood and moment conditions to construct estimators for static and dynamic sample selection models.

The fixed effects approach discussed earlier, and further developed in this paper, places no assumptions on the relationship between the individual-specific effects and the explanatory variables. A common alternative is to adopt a random effects framework, which assumes a specific distribution for the individual effects and estimates the model parameters via maximum likelihood. However, this approach faces two key challenges.

First, in many economic applications, explanatory variables are at least partially chosen by individuals. In such cases, it is often implausible to assume independence between these variables and the individual effects (see \citealt{Mundlak1961}). Second, dynamic models introduce the initial conditions problem: the distribution of the initial dependent variable typically depends on both the individual effects and the explanatory variables in a complex way (see \citealt{Heckman81b}).

A solution to these issues, originally proposed by \cite{Mundlak1978}, is the correlated random effects model. This approach specifies a model for the individual effects conditional on the explanatory variables and, in dynamic settings, also on the initial conditions. For further discussion, see \cite{wooldridge2005}.\footnote{For a recent overview of the historical distinction between fixed and random effects in panel data, see \cite{BellemareMillimet2025}.}

 The remainder of the paper is organized as follows.
Section \ref{sec:Static} reviews and extends the conditional likelihood approach for estimating common parameters in static logit models. The key insight from this section is that under relatively simple conditions, one can identify a sufficient statistic for the fixed effects, which then serves as the basis for conditional likelihood estimation of the common parameters—specifically, the coefficients on strictly exogenous explanatory variables.
Section \ref{sec:DynamicsNoX} turns to dynamic models in which the only explanatory variables are lagged outcomes. In this setting, there are again straightforward conditions under which a sufficient statistic for the fixed effects can be found. However, unlike the static case, the resulting conditional likelihood may fail to identify all the common parameters of the model.
This limitation motivates Section \ref{sec:GeneralDynamic}, which explores more general dynamic specifications. Within the logit framework, we demonstrate that it is often possible to construct moment conditions that allow for estimation of the parameters even when no non-trivial sufficient statistic exists, or when conditioning on a sufficient statistic yields a likelihood function that is uninformative about a subset of the parameters of interest. 
Section \ref{sec: Conclusion} concludes.

\section{Static logit models}
\label{sec:Static}

\subsection{Model setup}

We observe data $(Y_t, X_t, w_t)$ for $t = 1, \ldots, T$, where $Y_t \in \{0,1\}$ is a random binary outcome, $X_t \in \mathbb{R}^{d_x}$ is a random vector of covariates whose associated coefficient $\beta \in \mathbb{R}^{d_x}$ is the parameter of interest, and $w_t \in \mathbb{R}^{d_w}$ are non-random vectors associated with an unobserved effect $A \in \mathbb{R}^{d_w}$.\footnote{
Most results extend to random $w_t$, but since all our examples involve non-random $w_t$, we focus on that case.
}
We denote the outcome vector as $Y = (Y_1, \ldots, Y_T)' \in \mathcal{Y} = \{0,1\}^T$, and we write $X = (X_1, \ldots, X_T) \in \mathcal{X} \subset \mathbb{R}^{d_x \times T}$ and $\mathsf{W} = (w_1, \ldots, w_T) \in \mathbb{R}^{d_w \times T}$. We use capital letters to denote random variables (such as $Y_t$, $X_t$, and $A$) and upright font to denote non-random matrices (such as $\mathsf{W}$). The parameter $\beta$ is treated as non-random, while the fixed effect $A$ is allowed to be random.

\begin{assumption} 
\label{ass:StaticModel}
The data-generating process is
\begin{align*}       
{\rm Pr} \left( Y=y \, \big| \, X,A \right)  &=  \prod_{t=1}^T \, 
\left[ \frac {1} {1+\exp(X_t' \, \beta + w_t' \, A) }  
\right]^{1-y_t}
\left[ \frac {\exp(X_t' \, \beta + w_t' \, A)} {1+\exp(X_t' \, \beta + w_t' \, A) }  
\right]^{y_t} .
\end{align*}
\end{assumption}

\medskip
This is a standard assumption for binary logit models with fixed effects. We could write all probability statements explicitly conditional on $\mathsf{W}$ (e.g.\ ${\rm Pr}(Y = y \mid \mathsf{W}, X, A)$ in Assumption~\ref{ass:StaticModel}), but since $\mathsf{W}$ is treated as non-random throughout, we omit this for notational simplicity. 

Since we consider $A \in \mathbb{R}^{d_w}$, Assumption~\ref{ass:StaticModel} implies
that
\begin{align}
    0<{\rm Pr} \left( Y=y \, \big| \, X=x  \right)<1 , \qquad
   \text{ for all }  y \in {\cal Y} \text{ and } x \in {\cal X}.
   \label{BoundedProbability}
\end{align}   
However, all results in this section extend to cases where (components of) $A$ may take values in $\pm \infty$, provided that the boundedness condition in \eqref{BoundedProbability} continues to hold.

\begin{example}[Standard fixed effects in panel data:]
\label{ex:StandardPanel}
    Consider $d_w = 1$ and $w_t = 1$ for all $t \in \{1, \ldots, T\}$. Then
    $$
        X_t' \, \beta + w_t' \, A = X_t' \, \beta + A
    $$
    corresponds to the standard panel logit model with an individual-specific intercept. We observe an i.i.d.\ sample of $(Y,X)$,
    denoted by $\{(Y_i, X_i) : i = 1, \ldots, n\}$, where the data for each unit $i$ satisfy Assumption~\ref{ass:StaticModel}. The index in the logit specification becomes $X_{it}' \, \beta + A_i$.
    $\mathsf{W} = (1, \ldots, 1)$ and $\beta$ are non-random and constant across units $i$.
\end{example}

\begin{example}[Dyadic fixed effects in network data:]
\label{ex:DyadicExample}
    Consider $d_w = n$ and $T = n(n - 1)/2$, where each observation $t = (i,j) = (j,i)$ corresponds to an unordered pair of distinct units $i, j \in \{1, \ldots, n\}$, with $i \neq j$. For each dyad $(i,j)$, we observe a binary outcome $Y_{ij} \in \{0,1\}$ indicating whether a link is present between units $i$ and $j$. Let $X_{ij} \in \mathbb{R}^{d_x}$ denote dyad-level covariates capturing observed characteristics of the pair, such as homophily measures. Define $w_{ij} \in \mathbb{R}^n$ as a selection vector such that\footnote{
    For each unordered pair $t = (i,j)$ with $i \neq j$, define the vector $w_{ij} \in \mathbb{R}^n$ by
$$
        (w_{ij})_k = 
        \begin{cases}
        1, & \text{if } k = i \text{ or } k = j, \\
        0, & \text{otherwise}.
        \end{cases}
$$
    }
$$
        X_{ij}' \, \beta + w_{ij}' \, A = X_{ij}' \, \beta + A_i + A_j
$$
    corresponds to a network formation model with unit-specific fixed effects.
    This specification captures both homophily (through $X_{ij}$) and degree heterogeneity (through $A_i$ and $A_j$), as in \cite{graham2017econometric}. The model can be interpreted as describing a small subnetwork within a larger graph, and by considering many such subnetworks, we obtain multiple independent or weakly dependent observations of $(Y, X)$. For example, 
    \cite{graham2017econometric} considers subnetworks of size $n=4$ to construct his ``tetrad logit'' estimator. Again, $\mathsf{W}$ and $\beta$ are non-random and constant across subnetworks.
\end{example}

These are two classic examples of models satisfying the structure in Assumption~\ref{ass:StaticModel}. Various generalizations are discussed later.

\subsection{Identification via conditional likelihood}

Our goal is to identify the parameter $\beta$ without imposing any assumptions on the distribution of the unobserved effect $A$. In the static binary logit model of Assumption~\ref{ass:StaticModel}, it is well known that  $\mathsf{W} Y$ is a sufficient statistic for $A$, implying that conditioning on this statistic removes dependence on $A$.
Formally, for any two outcome vectors $y_1, y_2 \in \mathcal{Y} = \{0,1\}^T$ such that $\mathsf{W} y_1 = \mathsf{W} y_2$, one can show that
\begin{align}
        \frac{{\rm Pr} \left( Y=y_1 \, \big| \, X,A \right) }
       {{\rm Pr} \left( Y=y_2 \, \big| \, X,A \right) }
       &= \exp\left[\beta' X (y_1-y_2)\right] ,
        \label{SufficientStats1}
\end{align}
which implies that the distribution $Y$
conditional on $Y \in \{y_1, y_2\}$, $X$, and $A$ does not depend on $A$, as long as $\mathsf{W} y_1 = \mathsf{W} y_2$:
\begin{align}
        {\rm Pr} \left( Y=y_1 \, \big| \, X,A, Y \in \{y_1,y_2\} \right)
        &= \frac{\exp\left[\beta' X (y_1-y_2)\right]}
        {1+\exp\left[\beta' X (y_1-y_2)\right]} .
        \label{SufficientStats2}
\end{align}
To construct such identifying pairs $y_1, y_2$, it is useful to reformulate the condition $\mathsf{W} y_1 = \mathsf{W} y_2$ as requiring the existence of a difference vector $w_{\perp} = y_1 - y_2$ satisfying $\mathsf{W} w_{\perp} = 0$. The following lemma formalizes this observation.

\begin{lemma}
   \label{lemma:Wperp}
   There exist
   $y_1,y_2 \in \{0,1\}^T$ with
   $y_1 \neq y_2$ and
    $\mathsf{W} y_1 = \mathsf{W} y_2$
  if and only if  
   there exists $w_{\perp} \in \{-1,0,1\}^T$
   with $w_{\perp} \neq 0$ and  $\mathsf{W} w_{\perp} = 0$.
\end{lemma}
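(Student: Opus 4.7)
The plan is to prove both directions directly by an explicit construction linking pairs $(y_1,y_2)$ in $\{0,1\}^T$ and vectors $w_\perp$ in $\{-1,0,1\}^T$ via the map $w_\perp = y_1 - y_2$. The lemma is essentially a restatement, so the proof will be short; the only real content is observing that the ``positive/negative part'' decomposition of a $\{-1,0,1\}$-vector produces $\{0,1\}$-vectors.

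For the forward direction ($\Rightarrow$), I would simply set $w_\perp := y_1 - y_2$. Since both $y_1,y_2 \in \{0,1\}^T$, each component $(y_1 - y_2)_t \in \{-1,0,1\}$, so $w_\perp \in \{-1,0,1\}^T$. The hypothesis $y_1 \neq y_2$ yields $w_\perp \neq 0$, and linearity gives $\mathsf{W} w_\perp = \mathsf{W} y_1 - \mathsf{W} y_2 = 0$.

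For the reverse direction ($\Leftarrow$), given $w_\perp \in \{-1,0,1\}^T$ with $w_\perp \neq 0$ and $\mathsf{W} w_\perp = 0$, I would define
\begin{equation*}
    (y_1)_t := \max\{(w_\perp)_t, 0\}, \qquad (y_2)_t := \max\{-(w_\perp)_t, 0\},
\end{equation*}
i.e., the positive and negative parts of $w_\perp$. Because $(w_\perp)_t \in \{-1,0,1\}$, both $(y_1)_t$ and $(y_2)_t$ lie in $\{0,1\}$, so $y_1, y_2 \in \{0,1\}^T$. By construction $y_1 - y_2 = w_\perp$, so $y_1 \neq y_2$ (since $w_\perp \neq 0$) and $\mathsf{W} y_1 - \mathsf{W} y_2 = \mathsf{W} w_\perp = 0$.

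There is no substantive obstacle here: the lemma is a bijective bookkeeping fact stemming from the property that $\{0,1\}-\{0,1\} \subseteq \{-1,0,1\}$ and, conversely, every $\{-1,0,1\}$-vector admits a unique decomposition into two disjoint $\{0,1\}$-indicator vectors (its positive and negative parts). The only thing worth emphasizing in the write-up is this decomposition, since it is what makes the reformulation $w_\perp \in \{-1,0,1\}^T$ a genuinely convenient way to search for identifying pairs $(y_1,y_2)$ in later arguments.
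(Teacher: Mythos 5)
Your proof is correct and follows essentially the same route as the paper: the forward direction takes $w_\perp = y_1 - y_2$, and the reverse direction uses the positive/negative-part decomposition, which for $\{-1,0,1\}$-valued entries coincides exactly with the paper's choice $y_{1t}=\mathbbm{1}(w_{\perp,t}=1)$, $y_{2t}=\mathbbm{1}(w_{\perp,t}=-1)$. No gaps.
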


\medskip
The condition $\mathsf{W} w_{\perp} = 0$ is familiar from linear models: if the $T$-vector $Y$ satisfies
$Y = X' \beta + \mathsf{W}' A + \varepsilon$ and $\mathsf{W} w_{\perp} = 0$, then pre-multiplying by $w_{\perp}'$ eliminates the nuisance term $\mathsf{W} A$. Remarkably, the same condition allows us to eliminate $A$ in the binary logit model as well --- provided that $w_{\perp}$ satisfies
$w_{\perp} \in \{-1,0,1\}^T$, which implies that it
corresponds to a difference of two binary vectors.
Ultimately, this leads to the following identification result.

\begin{theorem}
   \label{th:IdentificationStatic}
   Suppose Assumption~\ref{ass:StaticModel} holds. Assume further that for some integer $d_\perp \geq 1$, there exists a non-random matrix $\mathsf{W}_\perp \in \{-1, 0, 1\}^{T \times d_\perp}$ such that $\mathsf{W} \, \mathsf{W}_\perp = 0$, and there is no vector $b \in \mathbb{R}^{d_x}$ for which $b' X \mathsf{W}_\perp = 0$ almost surely.  
   Then, 
   the parameter $\beta$ is uniquely identified
   from the distribution of $(Y,X)$.
\end{theorem}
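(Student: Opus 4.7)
The plan is to apply equation~\eqref{SufficientStats2} column-by-column to $\mathsf{W}_\perp$, producing conditional probabilities that depend on $\beta$ but not on $A$, and then use the linear independence condition on $X \mathsf{W}_\perp$ to pin $\beta$ down uniquely.

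First, for each column $w_\perp^{(k)}$ of $\mathsf{W}_\perp$, $k = 1, \ldots, d_\perp$, I would use Lemma~\ref{lemma:Wperp} to write $w_\perp^{(k)} = y_1^{(k)} - y_2^{(k)}$ with $y_1^{(k)}, y_2^{(k)} \in \{0,1\}^T$ satisfying $\mathsf{W} y_1^{(k)} = \mathsf{W} y_2^{(k)}$. Equation~\eqref{SufficientStats2} then yields
\begin{align*}
{\rm Pr}\bigl(Y = y_1^{(k)} \,\big|\, Y \in \{y_1^{(k)}, y_2^{(k)}\}, X, A\bigr) = \frac{\exp\bigl[\beta' X w_\perp^{(k)}\bigr]}{1 + \exp\bigl[\beta' X w_\perp^{(k)}\bigr]} ,
\end{align*}
which is free of $A$. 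The boundedness condition~\eqref{BoundedProbability} ensures the conditioning event has positive probability, so this also equals ${\rm Pr}(Y = y_1^{(k)} \mid Y \in \{y_1^{(k)}, y_2^{(k)}\}, X)$, a functional of the observed joint distribution of $(Y, X)$.

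Second, I would argue uniqueness directly. Suppose $\tilde\beta$, paired with some distribution of $A$, generates the same distribution of $(Y, X)$ as $\beta$. Applying the same construction with $\tilde\beta$ in place of $\beta$ and equating the resulting (observable) conditional probabilities forces
\begin{align*}
\frac{\exp\bigl[\beta' X w_\perp^{(k)}\bigr]}{1 + \exp\bigl[\beta' X w_\perp^{(k)}\bigr]} = \frac{\exp\bigl[\tilde\beta' X w_\perp^{(k)}\bigr]}{1 + \exp\bigl[\tilde\beta' X w_\perp^{(k)}\bigr]}
\end{align*}
almost surely for each $k$. Strict monotonicity of the logistic CDF $u \mapsto \exp(u)/(1+\exp(u))$ gives $(\beta - \tilde\beta)' X w_\perp^{(k)} = 0$ almost surely for every $k$, i.e., $(\beta - \tilde\beta)' X \mathsf{W}_\perp = 0$ almost surely. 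By the hypothesis on $\mathsf{W}_\perp$, taking $b = \beta - \tilde\beta$ yields $\beta = \tilde\beta$.

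The main obstacle is conceptual rather than computational: one must be careful in formalizing observational equivalence when the distribution of $A$ is entirely unrestricted. The key observation that makes the argument clean is that the conditional probability in~\eqref{SufficientStats2} is already $A$-free, so any integration over an $A$-distribution leaves it invariant, which is precisely what reduces the identification step to the linear algebra condition imposed on $X \mathsf{W}_\perp$.
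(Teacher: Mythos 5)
Your proposal is correct and follows essentially the same route as the paper's proof: both use Lemma~\ref{lemma:Wperp} to convert each column of $\mathsf{W}_\perp$ into an outcome pair, show that $\beta' X \mathsf{W}_\perp$ is recoverable from the observed conditional probabilities (the paper via an explicit log-odds inversion, you via strict monotonicity of the logistic CDF — the same step in different clothing), and conclude by contradiction using the non-collinearity condition. Your added remark that \eqref{BoundedProbability} guarantees the conditioning event has positive probability is a welcome bit of extra care but does not change the argument.
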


\medskip
Crucially, this identification result imposes no restrictions on the distribution of $A$ or its dependence with $X$.
The condition $\mathsf{W} \, \mathsf{W}_{\perp} = 0$ ensures that the columns of $\mathsf{W}_{\perp}$ are orthogonal to the fixed effects. The rank condition guarantees that these orthogonal directions yield sufficient variation in $X$ to identify all components of $\beta$. 

Note that the number of columns in $\mathsf{W}_{\perp}$, denoted $d_\perp$, can be smaller than $d_x$. What matters is that the matrix $X \mathsf{W}_{\perp}$ has enough variation to identify $\beta$. In particular, the vector $\beta$ can be identified with $d_\perp = 1$.

\begin{remark}
\label{remark:LinearAnalogy}
The identification condition in Theorem~\ref{th:IdentificationStatic} closely parallels that of a standard linear fixed effects model. Suppose we observe 
$
Y = X' \beta + \mathsf{W}' A + \varepsilon$. In this setting, the parameter $\beta$ is identified if there exists a matrix $\mathsf{W}_{\perp} \in \mathbb{R}^{T \times d_\perp}$ such that $\mathsf{W} \mathsf{W}_{\perp} = 0$ and $X \mathsf{W}_{\perp}$ is non-collinear. Pre-multiplying the linear model equation by $\mathsf{W}_{\perp}'$ yields
$
\mathsf{W}_{\perp}' Y = \mathsf{W}_{\perp}' X' \beta + \mathsf{W}_{\perp}' \varepsilon,
$
in which $A$ no longer appears. The resulting equation can then be estimated by ordinary least squares (OLS).

In the binary choice case, we are subject to the additional restriction that $\mathsf{W}_{\perp}$ must have integer entries in $\{-1,0,1\}$, since it must represent differences between binary outcome vectors. Nonetheless, the identification conditions for the logit model 
in Theorem~\ref{th:IdentificationStatic}
are directly analogous to that of the linear model, modulo this constraint.
\end{remark}

\begin{remark}
An alternative way to understand our identification result is via the population conditional maximum likelihood estimator (CMLE), as in \citet{davezies2024identification}. Given any $s \in \mathbb{R}^{d_w}$, define $\mathcal{S}(s) = \{y \in \{0,1\}^T : \mathsf{W}y = s \}$ and consider the conditional log-likelihood
\[
\ell_c(\beta \mid y,x,s) := \log \left[ {\rm Pr}(Y = y \mid X = x, \mathsf{W}Y = s) \right] ,
\]
which is well-defined and free of $A$ due to sufficiency of $\mathsf{W}Y$. If the expected Hessian
\[
H_{\rm CMLE} := \mathbb{E}\left[ - \frac{\partial^2 \ell_c(\beta \mid Y,X,\mathsf{W}Y)}{\partial \beta \, \partial \beta'} \right]
\]
is positive definite, then $\beta$ is the unique maximizer of the population CMLE objective.
To connect this to Theorem~\ref{th:IdentificationStatic}, observe that
\[
H_{\rm CMLE} =
\mathbb{E}\left[ \sum_{y_1,y_2 \in \mathcal{S}(\mathsf{W}Y)} 
\omega(\mathsf{W}Y, X, y_1, y_2) \,
X (y_1 - y_2) (y_1 - y_2)' X' \right]
\]
for some positive weights $\omega(\mathsf{W}Y, X, y_1, y_2) > 0$. This shows that variation in $X(y_1 - y_2)$ along directions satisfying $\mathsf{W}(y_1 - y_2) = 0$ is essential for identification of $\beta$ under CMLE, just as in our rank condition.
\end{remark}

The presentation here is closely related to classical conditional likelihood approaches (e.g., \citealp{rasch1960studies,andersen1970asymptotic,chamberlain_analysis_1980}), but the formulation in Theorem~\ref{th:IdentificationStatic} highlights a useful structural parallel to linear models. While the sufficiency argument is well known, we are not aware of prior work that frames the construction of identifying pairs through the condition $W w_{\perp} = 0$ with $w_{\perp} \in \{-1,0,1\}^T$. This perspective unifies a number of existing results and provides a clean basis for constructing and analyzing conditional likelihood estimators in settings with general fixed effect structures, as illustrated in the examples that follow.

\subsection{Examples}
We now illustrate Theorem~\ref{th:IdentificationStatic} in several models. Some of the examples are well-known, while others are novel.
In each case, we describe the fixed effect structure and provide explicit constructions of vectors  $w_{\perp}$ (or matrices $\mathsf{W}_{\perp}$) that satisfy the orthogonality condition $W w_{\perp} = 0$. We aim to carry over as much intuition as possible from the linear model analogy in Remark~\ref{remark:LinearAnalogy}.

\subsubsection{Standard fixed effects in panel data}
\label{ex:StaticStandardFE}

This example corresponds to the setup in Example~\ref{ex:StandardPanel}. Recall that $d_w = 1$ and $w_t = 1$ for all $t \in \{1,\ldots,T\}$, so that the index in the logit model is $X_t' \beta + A$, with a scalar fixed effect $A$ shared across all time periods, and $\mathsf{W} = (1, \ldots, 1) \in \mathbb{R}^{1 \times T}$.
To understand the identification strategy, consider first the analogous linear model, making the cross-sectional index $i$ explicit:
$$
Y_{it} = X_{it}' \beta + A_i + \varepsilon_{it}, \qquad t = 1, \ldots, T, \quad i=1,\ldots,n.
$$
Here, we can eliminate $A_i$ by differencing, e.g. for $T = 2$:
$$
Y_{i1} - Y_{i2} = (X_{i1} - X_{i2})' \beta + (\varepsilon_{i1} - \varepsilon_{i2}).
$$
This corresponds to the linear combination $w_{\perp}' Y_i$ with $w_{\perp} = (1, -1)'$.
Crucially, since $w_{\perp} \in \{-1,0,1\}^T$, the same vector works for the logit model:

For the static panel logit with $T=2$ we know that
$
{\rm Pr} \left( Y_i = y \, \big| \, X_i, A_i, Y_{i1} + Y_{i2} = 1 \right)
$
is free of $A_i$. The conditioning event $Y_{i1} + Y_{i2} = 1$ restricts us to the pair $y_1=(1,0)'$ and $y_2=(0,1)'$, for which $w_{\perp} = y_1 - y_2 = (1,-1)'$.
This recovers the classical results of \cite{rasch1960studies} and \cite{andersen1970asymptotic}. For $T > 2$, any vector in $\{-1,0,1\}^T$ with components summing to zero provides a valid $w_{\perp}$. If $X_i w_{\perp}$ varies sufficiently across units, then $\beta$ is identified.

\subsubsection{Heterogeneous time trends}
\label{ex:HeterogeneousTrends}

Let $d_w = 2$, and for each $t \in \{1, \ldots, T\}$, define the regressors associated with the fixed effect as $w_t = (1,t)'$, so that the index becomes
$
X_t' \, \beta + w_t' \, A = X_t' \beta + A_1 + t A_2,
$
where $A_1$ is a unit-specific intercept and $A_2$ a unit-specific time trend. The matrix $\mathsf{W} \in \mathbb{R}^{2 \times T}$ consists of a row of ones and a row of time indices.
Again, we consider the linear model analog,  making the cross-sectional index $i$ explicit:
$$
Y_{it} = X_{it}' \beta + A_{i1} + t A_{i2} + \varepsilon_{it}, \qquad t = 1, \ldots, T, \quad i=1,\ldots,n.
$$
For the linear model, $T=3$ is sufficient to eliminate both fixed effects:
$$
Y_{i1} - 2 Y_{i2} + Y_{i3} = (X_{i1} - 2 X_{i2} + X_{i3})' \beta + (\varepsilon_{i1} - 2\varepsilon_{i2} + \varepsilon_{i3}).
$$
This linear combination corresponds to $w_{\perp}=(1,-2,1)'$, which does not satisfy $w_{\perp} \in \{-1,0,1\}^T$, and is therefore not applicable to the logit model. Indeed, for the static panel logit model with heterogeneous time trends and $T=3$, the coefficient $\beta$ is generally not point-identified.

To eliminate both $A_{i1}$ and $A_{i2}$ in the logit model, we need 
$w_{\perp} \in \{-1,0,1\}^T$ such that $\sum_{t=1}^T w_{\perp,t} = 0$ and $\sum_{t=1}^T w_{\perp,t} \, t = 0$. 
For $T = 4$ this is satisfied for $w_{\perp} = (1, -1, -1, 1)'$,
which implies that in the static logit model with heterogeneous time trends, ${\rm Pr} \left( Y=y_1 \, \big| \, X,A \right) / {\rm Pr} \left( Y=y_2 \, \big| \, X,A \right)$ is independent of $A=(A_1,A_2)$ for $y_1 = (1,0,0,1)'$ and $y_2=(0,1,1,0)'$, since $w_{\perp} = y_1 - y_2$.

More generally, we can consider $w_t=(1,t,t^2,\ldots,t^p)'$, with corresponding index
$$ X_t' \, \beta + w_t' \, A =X_t' \, \beta + A_1 + t \, A_2  + t^2 \, A_3
     + \ldots + t^p A_{p+1} .
   $$
Table~\ref{tab:PanelTimeTrends} shows the minimal number of time periods $T$ and corresponding weight vectors $w_{\perp} \in \{-1,0,1\}^T$ that satisfy $W w_{\perp} = 0$ for this model for $p \leq 5$ (for $p=6$ the minimal $T$ is $31$).
For $p \geq 1$, the minimal weight vectors exhibit an alternating symmetry pattern (even $p$ antisymmetric, odd $p$ symmetric) and can be constructed recursively by placing the $(p-1)$-solution and its reversed copy (negated for antisymmetric cases) with zero-padding and optimal overlap to minimize length while maintaining the constraint $w_{\perp} \in \{-1,0,1\}^T$.

\begin{table}[tb]
\begin{center}
\begin{tabular}{c|c|l}
$p$ & $T$ & $w_{\perp}$ \\
\hline
0 & 2 & (1, -1)' \\
1 & 4 & (1, -1, -1, 1)' \\
2 & 7 & (1, -1, -1, 0, 1, 1, -1)' \\
3 & 12 & (1, -1, -1, 0, 1, 0, 0, 1, 0, -1, -1, 1)' \\
4 & 16 & (1, -1, -1, 0, 0, 1, 1, 1, -1, -1, -1, 0, 0, 1, 1, -1)' \\
5 & 23 & (1, -1, -1, 0, 0, 1, 1, 0, 0, 0, -1, 0, -1, 0, 0, 0, 1, 1, 0, 0, -1, -1, 1)' 
\end{tabular}
\end{center}
\caption{\label{tab:PanelTimeTrends}  The table shows the minimal number of time periods $T$ and the corresponding weight vector $w_{\perp}$ needed to eliminate fixed effects of the form $A_1 + t A_2 + \ldots + t^p A_{p+1}$ in static binary logit panel models.}
\end{table}

\subsubsection{Overlapping fixed effects}
\label{ex:OverlappingEffects}

Consider a setting with overlapping fixed effects where $T=3$ and $d_w=2$. Define
$$
\mathsf{W} = \begin{pmatrix}
1 & 1 & 0 \\
0 & 1 & 1
\end{pmatrix},
$$
which yields the index structure
$$
X_t' \, \beta + w_t' \, A = \begin{cases} 
X_1' \, \beta + A_1 & \text{if } t=1, \\
X_2' \, \beta + A_1 + A_2 & \text{if } t=2, \\
X_3' \, \beta + A_2 & \text{if } t=3.
\end{cases}
$$
Note that observation $t=2$ is affected by both fixed effects, while observations $t=1$ and $t=3$ are each affected by only one. To eliminate both $A_1$ and $A_2$, we can use $w_{\perp}=(1,-1,1)'$, which satisfies $\mathsf{W} w_{\perp} = 0$ since each fixed effect appears with net zero weight. This may be the simplest non-trivial extension of the standard panel fixed effects model to more general fixed effects.

\subsubsection{Two-way fixed effects}
\label{ex:TwoWayPanel}

Consider two-way fixed effects panel models, where each observation is indexed by a unit-time pair $t = (i, \tau)$, with $i \in \{1,\ldots,n\}$ denoting individuals and $\tau \in \{1,\ldots,\mathcal{T}\}$ denoting time periods. The total number of observations is $T = n \cdot \mathcal{T}$, and the logit index takes the form
$
X_{i\tau}' \beta + A_i + B_\tau,
$
where $A_i$ and $B_\tau$ are unit and time fixed effects, respectively.
Again, consider the linear model analogue:
\begin{align}
Y_{i\tau} = X_{i\tau}' \beta + A_i + B_\tau + \varepsilon_{i\tau}.
  \label{TwoWayPanelLinear}
\end{align}
For $n = \mathcal{T} = 2$, the standard difference-in-differences strategy eliminates both fixed effects:
$$
(Y_{11} - Y_{12}) - (Y_{21} - Y_{22}) = [(X_{11} - X_{12}) - (X_{21} - X_{22})]' \beta + [(\varepsilon_{11} - \varepsilon_{12}) - (\varepsilon_{21} - \varepsilon_{22})].
$$
With $Y=(Y_{11},Y_{12},Y_{21},Y_{22})'$, the linear combination in the last display corresponds to
$w_{\perp} = (1, -1, -1, 1)'$, which satisfies $w_{\perp} \in \{-1,0,1\}^T$ and is therefore applicable to the logit model as well.

For such $2 \times 2$ subpanels with two-way fixed effects, identification strategies based on  $w_{\perp} = (1, -1, -1, 1)'$ have been developed by \citet{charbonneau2017multiple} for binary logit models, and by \citet{jochmans2017two} for certain nonlinear models with multiplicative unobservables. Such model structures arise naturally in applications such as matched employer-employee data or international trade.

It is convenient to think of $w_{\perp}$ as the vectorization of an $n \times \mathcal{T}$ matrix. In the $n = \mathcal{T} = 2$ case:
$$
w_{\perp} = \mathrm{vec} \begin{pmatrix}
1 & -1 \\
-1 & 1
\end{pmatrix} = 
\begin{pmatrix}
1 \\ -1 \\ -1 \\ 1
\end{pmatrix}.
$$
Our general condition $\mathsf{W} w_{\perp} = 0$ is equivalent to requiring that this $n \times \mathcal{T}$ matrix has all row and column sums equal to zero.
When $n = \mathcal{T} = 3$, many valid vectors $w_{\perp}$ exist, for instance:
$$
w_{\perp} = \mathrm{vec} \begin{pmatrix}
1 & -1 & 0 \\
0 & 1 & -1 \\
-1 & 0 & 1
\end{pmatrix}.
$$
Thus, the construction generalizes to larger $n \times \mathcal{T}$ 
 subpanels.

\subsubsection{Dyadic network formation}
\label{ex:DyadicNetwork}

This corresponds to the structure in Example~\ref{ex:DyadicExample}. Recall that $T = n(n-1)/2$ and $d_w = n$, where each observation corresponds to an unordered dyad $(i,j)$ with $i \neq j$ and index structure
$ X_{ij}' \beta + A_i + A_j$.
The corresponding linear model reads
$$
Y_{ij} = X_{ij}' \beta + A_i + A_j + \varepsilon_{ij},
$$
which is essentially the same model as \eqref{TwoWayPanelLinear}, except for the symmetry
$Y_{ij} = Y_{ji}$ and that $Y_{ii}$
is unobserved.

For $n=3$ we cannot eliminate the fixed effects in this model, because we have the same number of observations $(Y_{12},Y_{13},Y_{23})$ as fixed effects $(A_1,A_2,A_3)$. However, for $n=4$
we can simply reproduce the same different strategy as in model \eqref{TwoWayPanelLinear} by considering the $2 \times 2$ subpanel given by $i \in \{1,2\}$ and $j \in \{3,4\}$, which gives
$$
(Y_{13} - Y_{14}) - (Y_{23} - Y_{24}) = [(X_{13} - X_{14}) - (X_{23} - X_{24})]' \beta + [(\varepsilon_{13} - \varepsilon_{14}) - (\varepsilon_{23} - \varepsilon_{24})].
$$
Defining the appropriate vectorization operator by
$$
(Y_{12},Y_{13},Y_{14},Y_{23},Y_{24},Y_{34})' = {\rm vech} \left(
\begin{array}{cccc}
* & Y_{12} & Y_{13} & Y_{14} \\
Y_{12} & * & Y_{23} & Y_{24} \\
Y_{13} & Y_{23} & * & Y_{34} \\
Y_{14} & Y_{24} & Y_{34} & *
\end{array}
\right),
$$
we can write the differencing vector used here 
as
$$
w_{\perp} = {\rm vech} \left(
\begin{array}{cccc}
* & 0 & 1 & -1 \\
0 & * & -1 & 1 \\
1 & -1 & * & 0 \\
-1 & 1 & 0 & *
\end{array}
\right),
$$
which satisfies  $w_{\perp} \in \{-1,0,1\}^T$, and is therefore equally applicable to the logit model. Indeed, the   tetrad configuration in \citet{graham2017econometric} corresponds 
exactly to this vector $w_{\perp}$ 
and the corresponding outcome pairs
satisfying $w_{\perp} = y_1 - y_2$. 

This idea extends to larger subnetworks. For instance, with $n = 5$, one finds valid examples such as:
$$
w_{\perp} = {\rm vech} \left(
\begin{array}{ccccc}
* & 1 & 0 & 0 & -1 \\
1 & * & 0 & 0 & -1 \\
0 & 0 & * & -1 & 1 \\
0 & 0 & -1 & * & 1 \\
-1 & -1 & 1 & 1 & *
\end{array}
\right),
\quad
w_{\perp} = {\rm vech} \left(
\begin{array}{ccccc}
* & 1 & 1 & -1 & -1 \\
1 & * & -1 & 1 & -1 \\
1 & -1 & * & -1 & 1 \\
-1 & 1 & -1 & * & 1 \\
-1 & -1 & 1 & 1 & *
\end{array}
\right).
$$
These $w_{\perp}$ again correspond to identifying configurations that satisfy the conditions of Theorem~\ref{th:IdentificationStatic}.

\subsubsection{Triadic network models}

This example generalizes the two-way panel model structure to three dimensions, following \citet{MurisPakel2025}. Each observation is indexed by a triad $t = (i,j,k)$, with $i \in \{1,\ldots,n_1\}$, $j \in \{1,\ldots,n_2\}$, and $k \in \{1,\ldots,n_3\}$ denoting elements from three disjoint partite sets. The logit index takes the form
$
X_{ijk}' \beta + A_{ij} + B_{jk} + C_{ik},
$
where $A_{ij}$, $B_{jk}$, and $C_{ik}$ are pairwise fixed effects.
Consider the linear model analogue:
$$
Y_{ijk} = X_{ijk}' \beta + A_{ij} + B_{jk} + C_{ik} + \varepsilon_{ijk}.
$$
To eliminate all three sets of pairwise fixed effects, we need a triple-differencing strategy. Consider a hexad consisting of two nodes from each part: $i \in \{i_1, i_2\}$, $j \in \{j_1, j_2\}$, and $k \in \{k_1, k_2\}$. The triple difference
$$[(Y_{i_1 j_1 k_1} - Y_{i_1 j_1 k_2}) - (Y_{i_1 j_2 k_1} - Y_{i_1 j_2 k_2})]   - [(Y_{i_2 j_1 k_1} - Y_{i_2 j_1 k_2}) - (Y_{i_2 j_2 k_1} - Y_{i_2 j_2 k_2})] 
$$
eliminates all pairwise fixed effects. This 
corresponds to $w_{\perp} = (1, -1, 1, -1, 1, -1, 1, -1)'$, and
since $w_{\perp} \in \{-1,0,1\}^T$, the same vector works for the logit model. Each pair appears exactly twice in the eight triads: once with weight $+1$ and once with weight $-1$, ensuring that the net effect for each $(i,j)$, $(j,k)$, and $(i,k)$ pair is zero, thus satisfying $\mathsf{W} w_{\perp} = 0$.

This hexad-based identification strategy for triadic models was developed by \citet{MurisPakel2025}. Such models arise naturally in applications where a ``team'' is formed by selecting one node from each of three sets—for instance, firm-industry-time interactions, or trade triplets involving importer, exporter, and product. Unlike in dyadic models, however, \citet{MurisPakel2025} show that even under sparsity, identification and asymptotic normality require stronger conditions: the presence of a growing number of informative hexads. Their framework illustrates how richer forms of unobserved heterogeneity, while modeling more realistic structures, come at a cost in terms of data requirements.

\subsection{Main Takeaways}

\begin{itemize}

\item
For the static binary logit models with fixed effects in Assumption~\ref{ass:StaticModel}, it is well known that conditioning on the sufficient statistic $\mathsf{W}Y$ removes dependence on the fixed effects $A$. However, identification of $\beta$ requires more than sufficiency: we need outcome pairs $(y_1, y_2)$ such that $\mathsf{W} y_1 = \mathsf{W} y_2$ and $X(y_1 - y_2)$ exhibits sufficient variation. This condition is central to Theorem~\ref{th:IdentificationStatic}.

\item
A key insight is the structural parallel to linear models (Remark~\ref{remark:LinearAnalogy}): Differencing strategies that eliminate fixed effects in linear models also eliminate them in binary logit models --- provided the differencing vector satisfies $w_{\perp} \in \{-1,0,1\}^T$. This constraint reflects the binary nature of the data and leads to a unifying framework for constructing valid conditional likelihood estimators across a wide range of panel and network models.

\end{itemize}

\section{Sufficient statistic for dynamic logit models}
\label{sec:DynamicsNoX}

\subsection{Model setup}

We now turn to dynamic binary choice logit models with generalized fixed effects, extending the approach used for static models in Section~\ref{sec:Static}. In the current section, we focus on purely dynamic specifications --- generalized autoregressive models --- without additional covariates $X$. As in the static case, our identification strategy relies on finding sufficient statistics that eliminate the fixed effects. In contrast, Section~\ref{sec:GeneralDynamic} below considers dynamic models that include covariates $X$, where identification will instead be based on moment conditions that are invariant to the fixed effects.

Most of the notation carries over from the static case. In particular, the definitions of $Y$, $W$, and $A$ remain unchanged. However, we require some additional structure to capture the temporal dependence in the outcomes.
Let $Y^0$ denote a vector of initial conditions, and define
$$
Y^{t-1} = (Y_{t-1}, Y_{t-2}, \ldots, Y_1, Y^0)
$$
to be the history of outcomes up to time $t-1$, including initial conditions. The following assumption gives the class of generalized autoregressive models considered in this section.

\begin{assumption} 
\label{ass:DynamicModelWithoutX}
The data-generating process is
\begin{align*}       
{\rm Pr} \left( Y=y \, \big| \, Y^0,A \right)  &= \prod_{t=1}^T
{\rm Pr} \left( Y_t=y_t \, \big| \, Y^{t-1},A \right) ,
\\
{\rm Pr} \left( Y_t=y_t \, \big| \, Y^{t-1},A \right)
&=  \frac {\left[\exp(\pi_t(Y^{t-1},\theta) + w_t' \, A)\right]^{y_t}} {1+\exp(\pi_t(Y^{t-1},\theta) + w_t' \, A) }   ,
\end{align*}
where $\theta$ is an unknown parameter
and $\pi_t(\cdot,\cdot)$ is a known
function for every $t \in \{1,\ldots,T\}$.
\end{assumption}

Note that the specification of the unobserved fixed effects $A \in \mathbb{R}^{d_w}$ and the non-random regressor vectors $w_t \in \mathbb{R}^{d_w}$ is unchanged from the static case. Crucially, the model imposes no restrictions on the dependence between the fixed effects $A$ and the initial condition $Y^0$.

\begin{example}[Dynamic panel data]
\label{ex:DynamicPanel}
This example generalizes Example~\ref{ex:StandardPanel} by allowing for state dependence through a lagged dependent variable. Specifically, suppose that 
$\pi_t(Y^{t-1}, \theta) =  Y_{t-1} \, \gamma,$
so that the model becomes:
$$
\Pr(Y_t = 1 \mid Y^{t-1},  A) = \frac{ \exp( Y_{t-1} \, \gamma   + A) }{ 1 + \exp( Y_{t-1} \, \gamma   + A) },
$$
where  $\gamma \in \mathbb{R}$ captures state dependence while $A \in \mathbb{R}$ captures
individual-specific heterogeneity as before.
For this baseline model, there are well-known
results due to \cite{Chamberlain78} and based on \citet{cox1958regression} that show that 
how to identify and estimate $\gamma$
via conditional likelihood by conditioning on sufficient statistics based on transition counts.\footnote{\cite{Chamberlain84}, \cite{magnac2004panel} and \cite{d2010duration} generalize those result further.}
Our goal in this section is to generalize those existing results to more general fixed effect structures.
\end{example}

\begin{example}[Dynamic dyadic network formation]
\label{ex:DynamicDyadic}
This example extends the dyadic model of Example~\ref{ex:DyadicExample} to a dynamic setting. Suppose we observe a sequence of undirected binary networks among $n$ agents over periods $\tau = 0, 1, \ldots, \mathcal{T}$. Each observation corresponds to a dyad $(i,j)$, and the link indicator $Y_{ij\tau} = Y_{ji\tau} \in \{0,1\}$ records whether a link between $i$ and $j$ exists in period $\tau$.
The dynamic logit model considered in \cite{graham2016homophily} takes the form:
\[
\Pr(Y_{ij\tau} = 1 \mid Y_{ij,\tau-1}, R_{ij,\tau-1}, A_{ij}) = 
\frac{ \exp\left(  Y_{ij,\tau-1} \, \gamma +  R_{ij,\tau-1} \, \delta + A_{ij} \right) }
{ 1 + \exp\left(  Y_{ij,\tau-1} \, \gamma +  R_{ij,\tau-1} \, \delta + A_{ij} \right) },
\]
where $ Y_{ij,\tau-1}$ is the lagged link indicator
for the same dyad, $R_{ij,\tau-1} := \sum_{k \notin \{i,j\}} Y_{ik,\tau-1} Y_{jk,\tau-1}$ is the number of shared friends of $i$ and $j$ in the previous period, $A_{ij} \in \mathbb{R}$ is a dyad-specific fixed effect,
and $\gamma$ and $\delta$ are unknown parameters.
To map this into our general framework (Assumption~\ref{ass:DynamicModelWithoutX}), we define $T = \binom{n}{2} \cdot \mathcal{T}$, where $\mathcal{T}$ is the number of observed time periods in Graham’s model, and $\binom{n}{2}$ is the number of dyads. We then treat each dyad-time pair $(i,j,\tau)$ as a single observation indexed by $t = 1, \ldots, T$, see Section~\ref{subsec:DynamicNetworkNoX} below for more details.
In this formulation, the logit index for observation $t$ becomes
$\pi_t(Y^{t-1}, X\theta) =  Y_{ij,\tau-1} \, \gamma +  R_{ij,\tau-1} \, \delta$.
\end{example}

\subsection{Generalized sufficient statistics}

The following lemma is central to all identification results in this section.

\begin{lemma}
    \label{lemma:GeneralDynamicsSufficientStats}
    Suppose Assumption~\ref{ass:DynamicModelWithoutX} holds with fixed initial conditions $Y^0 = y^0$.
Let $y, \tilde y \in \{0,1\}^T$ be any two outcome sequences, and define
$y^{t-1} = (y_{t-1}, y_{t-2}, \ldots, y_1, y^0)$ and
$\tilde y^{\, t-1} = (\tilde y_{t-1}, \tilde y_{t-2}, \ldots, \tilde y_1, y^0)$, for each $t \in \{2,\ldots,T\}$.
Assume further that:
    \begin{itemize}
       \item[(i)]
    $\sum_{t=1}^T w_t y_t = \sum_{t=1}^T w_t \tilde{y}_t$.

    \item[(ii)]  
    $\Big[ \big(w_t,\pi_t(y^{t-1},\theta)\big) \, : \, t=2,\ldots,T\} \Big]$  
    is a permutation of  
    $\Big[ \big(w_t,\pi_t(\tilde y^{t-1},\theta)\big) \, : \, t=2,\ldots,T\} \Big].$
    \end{itemize}
    Then we have
$$
\frac{{\rm Pr}(Y=y | Y^0=y^0, A)}{{\rm Pr}(Y=\tilde y | Y^0=y^0, A)} = \exp\left\{   \sum_{t=1}^{T}\left[ y_{t} \, \pi_t(y^{t-1},\theta) - \tilde y_{t} \, \pi_t(\tilde y^{t-1},\theta) \right] \right\} ,
$$
which does not depend on $A$.
\end{lemma}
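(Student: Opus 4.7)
The plan is to start from the closed-form likelihood provided by Assumption~\ref{ass:DynamicModelWithoutX}, form the ratio, and then argue term-by-term that every $A$-dependent piece cancels under conditions (i) and (ii). Concretely, I would write
\[
   \Pr(Y=y \mid Y^0=y^0, A) \;=\; \prod_{t=1}^{T}
   \frac{\exp\bigl(y_t \, [\pi_t(y^{t-1},\theta) + w_t'A]\bigr)}
        {1+\exp\bigl(\pi_t(y^{t-1},\theta) + w_t'A\bigr)} ,
\]
and analogously for $\tilde y$, and then split the ratio into a ``numerator'' piece and a ``normalization'' piece.

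The numerator piece rearranges as
\[
  \exp\!\left\{\sum_{t=1}^{T}\bigl[y_t\pi_t(y^{t-1},\theta) - \tilde y_t \pi_t(\tilde y^{\,t-1},\theta)\bigr]\right\}
  \cdot
  \exp\!\left\{ A'\sum_{t=1}^{T} w_t (y_t - \tilde y_t)\right\}.
\]
The second exponential equals $1$ by hypothesis (i), $\sum_{t=1}^{T} w_t y_t = \sum_{t=1}^{T} w_t \tilde y_t$, leaving exactly the $A$-free expression that appears in the conclusion.

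It remains to show that the normalization piece,
\[
  \prod_{t=1}^{T}\frac{1+\exp\bigl(\pi_t(\tilde y^{\,t-1},\theta) + w_t'A\bigr)}
                       {1+\exp\bigl(\pi_t(y^{t-1},\theta) + w_t'A\bigr)},
\]
is equal to $1$. For $t=1$ we have $y^0 = \tilde y^{\,0}$ by construction, so the $t=1$ factors in numerator and denominator coincide and cancel. For $t \geq 2$ the key observation is that $(1+\exp(\pi_t(y^{t-1},\theta) + w_t'A))$ depends on $t$ only through the pair $(w_t,\pi_t(y^{t-1},\theta))$; since hypothesis (ii) says that the multisets $\{(w_t,\pi_t(y^{t-1},\theta))\}_{t=2}^{T}$ and $\{(w_t,\pi_t(\tilde y^{\,t-1},\theta))\}_{t=2}^{T}$ coincide up to a permutation, the two products over $t \geq 2$ are identical, being products of the same factors in a different order. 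Hence the normalization ratio equals $1$ and the lemma follows.

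There is no real technical obstacle here; the argument is a bookkeeping exercise. The only subtle point worth flagging carefully is that condition (i) is responsible for cancelling the linear-in-$A$ terms in the logit numerator, while condition (ii) is responsible for cancelling the $A$-dependent denominator normalizations, and that these two roles are genuinely separate: (i) alone would not eliminate $A$ from the $(1+\exp)$ factors, nor would (ii) alone eliminate $A$ from the $\exp(A' \sum_t w_t y_t)$ factor. Recognising that the $t=1$ factor drops out automatically (because it depends only on the common initial condition $y^0$) is what lets condition (ii) be stated over $t=2,\ldots,T$ rather than over all $t$.
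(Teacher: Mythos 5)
Your proof is correct and follows essentially the same route as the paper's: condition (i) kills the $\exp\bigl(A'\sum_t w_t(y_t-\tilde y_t)\bigr)$ factor from the logit numerators, the $t=1$ normalization factors cancel because the initial condition is common, and condition (ii) makes the remaining products over $t=2,\ldots,T$ of the $\bigl(1+\exp(\pi_t+w_t'A)\bigr)$ terms identical as multisets. The paper's proof is just a terser version of this same bookkeeping.
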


The proof is given in the appendix.

\subsection{AR($p$) panel data models}

Our first example of a model that satisfies Assumption~\ref{ass:DynamicModelWithoutX} and for which Lemma~\ref{lemma:GeneralDynamicsSufficientStats} yields immediate and useful identification results is the AR(1) model. In this case, we take $\theta = \gamma \in \mathbb{R}$, and set
$
\pi_t(Y^{t-1}, \theta) = Y_{t-1} \, \gamma,
$
with initial condition $Y^0 = Y_0$ given by the observed outcome in period $t = 0$.
Under this specification, the model in Assumption~\ref{ass:DynamicModelWithoutX} becomes:
\begin{align}       
{\rm Pr} \left( Y=y \, \big| \, Y_0=y_0,A \right)  &=  \prod_{t=1}^T \, 
\left[ \frac {1} {1+\exp(y_{t-1} \, \gamma + w_t' \, A) }  
\right]^{1-y_t}
\left[ \frac {\exp(y_{t-1} \, \gamma + w_t' \, A)} {1+\exp(y_{t-1} \, \gamma + w_t' \, A) }  
\right]^{y_t} .
   \label{ModelAR1}
\end{align}
We are going to discuss the AR(1) model in detail here, and afterwards briefly summarize the extension to AR($p$) model for $p>1$, with details given in the appendix.

In order to obtain identification results that mirror the results for the static model above as close as possible, we furthermore impose the following assumption on $w_t = (w_{t,1}, \ldots, w_{t,d_w})$ here:
\begin{align}
   w_{t,k} \in \{0,1\}, \quad
   \text{for all $k \in \{1,\ldots,d_w\}$}, \quad \text{and} \quad \sum_{k=1}^{d_w} w_{t,k} =1, \quad
   \text{for all $t \in \{1,\ldots,T\}$.}
   \label{RestrictionWbinary}
\end{align}
{\color{red} }These constraints on $w_t$  may appear overly restrictive at first glance. However, as we will explain in Remark~\ref{remark:WtRestriction}, these constraints are in fact without loss of generality from the perspective of constructing sufficient statistics for the fixed effect $A$ in this model.
To simplify notation, we define the lagged outcome $T$-vector as $Y_{\rm lag} = (Y_0, Y_1, \ldots, Y_{T-1})'.$
Similarly, we define the ``lead'' version of $\mathsf{W} = (w_1, \ldots, w_T)$ by $\mathsf{W}_{\rm lead} = (w_2, w_3,\ldots, w_T,0_{d_w \times 1}) \in \mathbb{R}^{d_w \times T}$.

\begin{theorem}
\label{th:AR1Identification}
We assume the AR(1) model in \eqref{ModelAR1}, with $w_t$ satisfying the restrictions in \eqref{RestrictionWbinary}.
We treat the initial condition $Y_0 = y_0$ as fixed and known.
    \begin{enumerate}[(i)]
       \item Then, $(\mathsf{W}Y, \mathsf{W}Y_{\rm lag})$ is a  statistic that is sufficient for the nuisance parameter $A$ (conditional on $Y_0$), in the sense that the distribution of $Y$ given $(\mathsf{W}Y, \mathsf{W}Y_{\rm lag})$ and $Y_0$ does not depend on $A$.
      Since $Y_0 = y_0$ is treated as fixed, we can equivalently express this sufficient statistic as the linear transformation $\mathsf{V} Y$, where 
       $\mathsf{V} := (\mathsf{W}'  ,\mathsf{W}'_{\rm lead})'$.

        \item Suppose further that there exist two binary vectors $y, \tilde y \in \{0,1\}^T$ such that
    $\mathsf{V} y  = \mathsf{V} \tilde y $
    and 
    $\sum_{t=1}^{T} y_{t} y_{t-1} \neq \sum_{t=1}^T \tilde y_{t} \tilde y_{t-1}$,
    where we set $Y_0=y_0 = \tilde y_0$. Then, the autoregressive parameter $\gamma$ is uniquely identified from the distribution of $Y$ conditional on $Y_0 = y_0$.
 
    \end{enumerate}
\end{theorem}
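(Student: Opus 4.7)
The plan is to derive both parts by specializing Lemma~\ref{lemma:GeneralDynamicsSufficientStats} to the AR(1) index $\pi_t(Y^{t-1},\theta) = Y_{t-1}\gamma$, treating $Y_0=y_0$ as fixed throughout. The key bridge is a linear-algebraic translation of the multiset condition in that lemma, made possible by the indicator restriction~\eqref{RestrictionWbinary} on the $w_t$'s.

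For part (i), I would fix any two sequences $y,\tilde y \in \{0,1\}^T$ with $\mathsf{V}y = \mathsf{V}\tilde y$ and verify both hypotheses of Lemma~\ref{lemma:GeneralDynamicsSufficientStats}. Hypothesis~(i) of the lemma, $\sum_t w_t y_t = \sum_t w_t \tilde y_t$, is just the first block $\mathsf{W}y=\mathsf{W}\tilde y$. Hypothesis~(ii) requires the multiset $\{(w_t, y_{t-1}\gamma) : t=2,\ldots,T\}$ to coincide with the corresponding multiset for $\tilde y$. Here the restriction~\eqref{RestrictionWbinary} is essential: since each $w_t$ is a canonical basis vector $e_{k(t)}$, a permutation realizing the multiset equality must permute only within each index block $S_k = \{t\ge 2 : w_t = e_k\}$, and so the multiset equality reduces to the block-wise count equalities
$$
\#\{t \in S_k : y_{t-1}=1\} = \#\{t \in S_k : \tilde y_{t-1}=1\}, \qquad k=1,\ldots,d_w.
$$
A direct computation shows these counts are exactly the components $(\mathsf{W}_{\rm lead}y)_k$ and $(\mathsf{W}_{\rm lead}\tilde y)_k$, so hypothesis~(ii) is equivalent to $\mathsf{W}_{\rm lead}y = \mathsf{W}_{\rm lead}\tilde y$, the second block of $\mathsf{V}y=\mathsf{V}\tilde y$. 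Lemma~\ref{lemma:GeneralDynamicsSufficientStats} then delivers that $\Pr(Y=y \mid Y_0=y_0,A)/\Pr(Y=\tilde y \mid Y_0=y_0,A)$ is free of $A$. Picking any reference outcome $y_\star$ within a given level set $\{y : \mathsf{V}y = v\}$ and writing $\Pr(Y=y \mid \mathsf{V}Y=v, Y_0=y_0, A)$ as a normalized ratio of such pairwise quotients then yields the stated sufficiency.

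For part (ii), I would specialize the explicit ratio formula in Lemma~\ref{lemma:GeneralDynamicsSufficientStats} to obtain
$$
\frac{\Pr(Y=y \mid Y_0=y_0, A)}{\Pr(Y=\tilde y \mid Y_0=y_0, A)} \;=\; \exp\!\left\{\gamma \sum_{t=1}^T\bigl(y_t y_{t-1} - \tilde y_t \tilde y_{t-1}\bigr)\right\}
$$
whenever $\mathsf{V}y = \mathsf{V}\tilde y$. By the normalization argument from part~(i), this equals the corresponding ratio of conditional probabilities given $(\mathsf{V}Y, Y_0)$, which is a functional of the distribution of $Y$ given $Y_0=y_0$ alone (by sufficiency). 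Under the hypothesis $D := \sum_t (y_t y_{t-1} - \tilde y_t \tilde y_{t-1}) \neq 0$, the map $\gamma \mapsto \exp(\gamma D)$ is strictly monotonic, so two distinct values of $\gamma$ would induce two distinct values for the identifiable ratio, a contradiction. Hence $\gamma$ is uniquely identified.

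The main obstacle is precisely the multiset-to-linear-algebra translation in part~(i): without the constraint~\eqref{RestrictionWbinary}, the multiset equality in hypothesis~(ii) of Lemma~\ref{lemma:GeneralDynamicsSufficientStats} would in general be strictly stronger than any linear constraint on $y$, and one could not summarize the sufficient statistic through the simple linear map $\mathsf{V}$. Once that reduction is in hand, both parts follow by straightforward bookkeeping and a one-line monotonicity argument in $\gamma$.
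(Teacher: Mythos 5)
Your proposal is correct and follows essentially the same route as the paper's proof: both parts are derived by specializing Lemma~\ref{lemma:GeneralDynamicsSufficientStats}, with the key step being the reduction of the permutation condition to block-wise count equalities via the basis-vector structure imposed by \eqref{RestrictionWbinary}, and part (ii) following from the strict monotonicity of $\exp(\gamma D)$ in $\gamma$. Your explicit normalization argument for sufficiency (writing the conditional probability as a ratio of $A$-free pairwise quotients) is a minor elaboration the paper leaves implicit, but the substance is identical.
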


The proof is provided in the appendix. Part (i) of the theorem establishes that the sufficient statistic for the fixed effect $A$ in this model is given by the pair $(\mathsf{W}Y, \mathsf{W}Y_{\rm lag})$, which can be explicitly written as 
$\left(\sum_{t=1}^T w_t y_t,\sum_{t=1}^T w_t y_{t-1}\right)$. The first term, $\sum_{t=1}^T w_t y_t$, is familiar from the static model in Section~\ref{sec:Static}, where it formed the sufficient statistic for $A$. The second term, $\sum_{t=1}^T w_t y_{t-1}$, is new and arises due to the autoregressive structure.

To understand the role of these statistics, consider two outcome paths $y, \tilde y \in \{0,1\}^T$. Using only $\sum_{t=1}^T w_t y_t = \sum_{t=1}^T w_t \tilde{y}_t$
and $y_{0} = \widetilde y_0$, the likelihood ratio simplifies as follows:
$$
\frac{{\rm Pr}(Y=y | Y_0=y_0, A)}{{\rm Pr}(Y=\tilde y | Y_0=y_0, A)} = \exp\left[\gamma  \sum_{t=1}^{T}(y_{t} y_{t-1} - \tilde y_{t} \tilde y_{t-1}) \right] \prod_{t=2}^T \frac{1+\exp( \tilde y_{t-1} \gamma+ w_t' A)}{1+\exp( y_{t-1}  \gamma+ w_t' A)} .
$$
Here, the second term still depends on $A$, which shows that, in contrast to the static model, matching only $\sum_{t=1}^T w_t y_t$ is not sufficient to eliminate the fixed effect from the likelihood ratio.
To fully eliminate $A$ from this ratio, we also need to use that
$\sum_{t=2}^T y_{t-1} = \sum_{t=2}^T \tilde y_{t-1}$ and
$\sum_{t=2}^T w_t y_{t-1} = \sum_{t=2}^T w_t \tilde{y}_{t-1}$ and $w_{t,k} \in \{0,1\}$ with  $\sum_{k=1}^{d_w} w_{t,k} =1$. Under those conditions, one can show that
$\prod_{t=2}^T \frac{1+\exp(\gamma \tilde y_{t-1} + w_t' A)}{1+\exp(\gamma y_{t-1} + w_t' A)}=1$, and therefore
$$
\frac{{\rm Pr}(Y=y | Y_0=y_0, A)}{{\rm Pr}(Y=\tilde y | Y_0=y_0, A)} = \exp\left[\gamma  \sum_{t=1}^{T}(y_{t} y_{t-1} - \tilde y_{t} \tilde y_{t-1}) \right]  
$$
This shows that $\gamma$ can be point-identified from the likelihood ratio,
as long as $\sum_{t=1}^{T} y_{t} y_{t-1} \neq \sum_{t=1}^T \tilde y_{t} \tilde y_{t-1}$, which is precisely the content of part (ii) of the theorem.

\begin{remark}
\label{remark:WtRestriction}
The restrictions $w_{t,k} \in \{0,1\}$ and $\sum_{k=1}^{d_w} w_{t,k} \in \{0,1\}$ in Assumption~\ref{RestrictionWbinary} may appear overly restrictive. However, these conditions are without loss of generality for the purpose of constructing sufficient statistics and identifying $\gamma$ via conditional likelihood methods.
To see this, consider the general autoregressive model where $w_t \in \mathbb{R}^{d_w}$ can take any values. Let $\Omega = \{\varphi_1, \varphi_2, \ldots, \varphi_{d_\omega}\} \subset \mathbb{R}^{d_w}$ denote the set of distinct values that $w_t$ assumes across $t = 1, \ldots, T$, where
$d_\omega$ is the cardinality of $\Omega$.
We can then define indicator variables
$\omega_{t,k} = \mathbbm{1}(w_t = \varphi_k)$, for $k = 1, \ldots, d_\omega$,
implying that $\omega_t = (\omega_{t,1}, \ldots, \omega_{t,d_\omega})'$ satisfies the binary restrictions on $w_t$ in \eqref{RestrictionWbinary}.
We then have $w_t' A = \omega_t' A^*$, where $A^* = (\varphi_1' A, \ldots, \varphi_{d_\omega}' A)' \in \mathbb{R}^{d_\omega}$. 

One can show that the sufficient statistics for $A$ in the original model with $w_t' A$ are identical to those for $A^*$ in the transformed model with $\omega_t' A^*$, as characterized in Theorem~\ref{th:AR1Identification}.
\end{remark}

\begin{remark}
   Applying Lemma~\ref{lemma:Wperp} in the context of part (ii)
   of Theorem~\ref{th:AR1Identification} gives the following:
       There exist  $y,\tilde y \in \{0,1\}^T$ with
   $y \neq \tilde y$ and
    $\mathsf{V} y_1  = \mathsf{V} y_2  $
  if and only if  
   there exists $w_{\perp} \in \{-1,0,1\}^T$
   with $w_{\perp} \neq 0$ and  $\mathsf{V} w_{\perp} = 0$.
   This can be computationally very useful to construct such pairs $y,\tilde y \in \{0,1\}^T$ that allow identification of $\gamma$. 
However, the additional identification condition  $\sum_{t=1}^{T} y_{t} y_{t-1} \neq \sum_{t=1}^T \tilde y_{t} \tilde y_{t-1}$ is non-linear in the outcomes
and can therefore not be expressed in terms of $w_{\perp}$ only.   
\end{remark}

\newenvironment{examplecont}[1]{%
  \refstepcounter{example}%
  \renewcommand{\theexample}{\ref{#1} (cont.)}%
  \begin{example}%
}{%
  \end{example}%
}

\begin{examplecont}{ex:DynamicPanel}
The simplest example covered by Theorem~\ref{th:AR1Identification} is the case where $w_t = 1$ for all $t = 1, \ldots, T$, implying a standard individual-specific fixed effect $A$ that enters identically in every period. In this setting, the model reduces to a pure logit AR(1) panel model with a scalar fixed effect and no covariates. The sufficient statistics for $A$ in this model — namely, $(\sum_{t=1}^T Y_t, \sum_{t=1}^T Y_{t-1})$ — are well known since \citet{cox1958regression}, and are often expressed in terms of $Y_0$ (which we condition on throughout), $\sum_{t=1}^{T-1} Y_t$, and $Y_T$.
To identify the autoregressive parameter $\gamma$, one must observe at least $T = 3$ periods (in addition to the initial condition $Y_0$). For example, when $T=3$ and $Y_0 = 0$, the sequences $y = (1, 0, 1)$ and $\tilde{y} = (0, 1, 1)$ satisfy all conditions in part (ii) of Theorem~\ref{th:AR1Identification} to guarantee identification of $\gamma$.
\end{examplecont}

\begin{example} \label{ExampleQuarterlyFixedEffects}
Consider quarterly panel data on binary outcomes $Y_0, Y_1, \ldots, Y_T$, and suppose we specify a logit AR(1) model with a single index of the form
$$
Y_{t-1} \, \gamma + \sum_{q=1}^4 \, A_{q} \, w_{t,q} ,
$$
where $w_{t,q}$ is an indicator for quarter $q \in \{1,2,3,4\}$
and $A_{q}$ are quarter and unit-specific fixed effects.
To identify $\gamma$ based on part (ii) of Theorem~\ref{th:AR1Identification} we require $y, \tilde y \in \{0,1\}^T$ that satisfy 
$$
\sum_{t=1}^T w_{qt}(y_{t} - \tilde y_{t}) = 0 ,
\quad \text{and} \quad
\sum_{t=1}^T w_{qt}(y_{t-1} - \tilde y_{t-1}) = 0,
\quad \text{for each } q = 1,2,3,4.
$$
Finding a solution with $y \neq \tilde y$ requires $T = 6$ quarterly observations (plus the initial quarter $Y_0$),
and the solution satisfies
$y - \tilde y = \pm  (1, 0, 0, 0, -1, 0)$.
In addition, we need to satisfy the condition $\sum_{t=1}^{T} y_{t} y_{t-1} \neq \sum_{t=1}^T \tilde y_{t} \tilde y_{t-1}$,
e.g.\ for $Y_0=0$, by choosing $y = (0,0,0,0,1,1)$
and $\tilde y = (1,0,0,0,0,1)$. This shows that $\gamma$ is identified for $T=6$.
\end{example}

\begin{example}\label{ex:Heterogeneous Time Trend}
Consider the case $d_w = 2$ with $w_{t,1} = 1$ and $w_{t,2} = t$, corresponding to a model with heterogeneous linear time trends and single index
\[
Y_{t-1} \, \gamma + A_1 + t \, A_2.
\]
By Remark~\ref{remark:WtRestriction}, this model can be reparameterized — for the purpose of constructing sufficient statistics — as a model with time-specific fixed effects:
\[
Y_{t-1} \, \gamma + A_t,
\]
where $d_w = T$ and $A_t$ denotes a distinct fixed effect for each time period. In this formulation, the fixed effects vary freely over time, and no nontrivial sufficient statistics can be constructed to eliminate $A$. As a result, our sufficiency-based identification strategy cannot be applied, and Theorem~\ref{th:AR1Identification} yields a negative result in this case: the autoregressive parameter $\gamma$ is not identified via conditional likelihood.
However, as discussed in Section~4.3 of \citet{honore2024dynamic}, identification is still possible using an alternative approach. Specifically, in the autoregressive panel model with heterogeneous linear time trends (i.e., index $Y_{t-1} \, \gamma + A_1 + t \, A_2$), the parameter $\gamma$ can be identified via a moment condition strategy, as described in Section~\ref{sec:GeneralDynamic} below, provided that $T \geq 8$.
This example illustrates that for sufficiently general index structures — particularly those involving unrestricted time-varying fixed effects — the sufficient statistics approach may fail, while the moment condition approach can still succeed.
\end{example}

\subsubsection*{Generalization to AR($p$) models with $p>1$}

We now generalize the previous example by considering an AR(2) model with $\pi_t(Y^{t-1}, \theta) = Y_{t-1} \, \gamma_1+ Y_{t-2} \, \gamma_2$ and initial condition $Y^0 = (Y_0,Y_{-1})\in\{0,1\}^2$. Under this setup, the model specified in Assumption~\ref{ass:DynamicModelWithoutX} becomes: 
\begin{align}       
&{\rm Pr} \left( Y=y \, \big| \, Y_0=y_0,A \right) \notag
\\
&=  \prod_{t=1}^T \, 
\left[ \frac {1} {1+\exp(y_{t-1} \, \gamma_1 + y_{t-2} \, \gamma_2 + w_t' \, A) }  
\right]^{1-y_t}
\left[ \frac {\exp(y_{t-1} \, \gamma_1 + y_{t-2} \, \gamma_2 + w_t' \, A)} {1+\exp(y_{t-1} \, \gamma_1 + y_{t-2} \, \gamma_2 + w_t' \, A) }  
\right]^{y_t} . \label{ModelAR2}
\end{align}
We refer readers to Appendix Section \ref{Details_ARp} for a general treatment of the AR($p$) models with $p>1$. Throughout, we maintain the assumption that for all $t \in \{1, \ldots, T\}$, the vector of weights $w_t = (w_{t,1}, \ldots, w_{t,d_w})$ satisfies the restrictions in \eqref{RestrictionWbinary}.

Applying Lemma \ref{lemma:GeneralDynamicsSufficientStats}, we show in Appendix Section \ref{Details_ARp} that the likelihood ratio for two outcome paths $y,\tilde y\in \{0,1\}^T$ \footnote{These generalize the sufficient statistics found in \cite{Chamberlain84} and \cite{magnac2000subsidised}.}
$$\frac{{\rm Pr}(Y=y | Y_0=y_0, A)}{{\rm Pr}(Y=\tilde y | Y_0=y_0, A)} $$
is invariant to $A$ if the following four conditions hold:
\begin{align} \label{SufficientStats_pureAR2}
   \sum_{t=1}^T w_t \, y_t &= \sum_{t=1}^T w_t \, \tilde{y}_t 
   \\
   \sum_{t=1}^T w_t \, y_{t-1} &= \sum_{t=1}^T w_t \, \tilde{y}_{t-1} \notag
   \\
   \sum_{t=1}^T w_t \, y_{t-2} &= \sum_{t=1}^T w_t \, \tilde{y}_{t-2} \notag
   \\
   \sum_{t=1}^T w_t \, y_{t-1} \, y_{t-2} &= \sum_{t=1}^T w_t \, \tilde y_{t-1} \, \tilde{y}_{t-2}. \notag
\end{align}
 The first condition coincides with Assumption (i) in Lemma \ref{lemma:GeneralDynamicsSufficientStats} while the remaining three are equivalent statements of Assumption (ii) for the AR(2). That is, they ensure together that $\Big[ \big(w_t,y_{t-1}\, \gamma_1+y_{t-2}\, \gamma_2\big) \, : \, t=2,\ldots,T\Big] $ 
is a permutation of  
$\Big[ \big(w_t,\tilde y_{t-1}\, \gamma_1+\tilde y_{t-2}\, \gamma_2\big) \, : \, t=2,\ldots,T\Big]$.

 \begin{example}
Consider the standard AR(2) model with $d_w = 1$, $w_{t} = 1$ for all $t$. In this case, the conditions in \eqref{SufficientStats_pureAR2} can only be satisfied if $T\geq 4$. For instance, with $T=4$ and initial condition $Y^0=(0,1)$, $y=(1,0,0,0)$ and $\tilde y=(0,1,0,0)$ 
form a valid pair. See also \cite{Chamberlain_1985} and \cite{honore2000panel}.
 \end{example}

 \begin{example}
In the spirit of Example \ref{ExampleQuarterlyFixedEffects}, consider quarterly panel data on binary outcomes $Y_{-1},Y_0, Y_1, \ldots, Y_T$, and assume an AR(2) structure with a single index 
$$
Y_{t-1} \, \gamma_1+Y_{t-2} \, \gamma_2 + \sum_{q=1}^4 \, A_{q} \, w_{t,q} ,
$$
where $w_{t,q}$ are indicators for quarter $q \in \{1,2,3,4\}$. In this setting, the set of restrictions \eqref{SufficientStats_pureAR2} can only be satisfied for $T\geq 7$. For example, for the initial condition $Y^0=(0,0)$, the outcome sequences $y=(0,0,0,0,1,0,1)$ and $\tilde y = (1,0,0,0,0,0,1)$ form a pair that allows  identification of $\gamma_2$.
 \end{example}

 \begin{remark}
 \label{Remark:ARp:conditionalMLE}
In general, the conditional likelihood approach does not allow for the identification of parameters other than the final autoregressive coefficient $\gamma_p$ in the (pure) AR($p$) model. To build intuition, consider again the simple case $p=2$. Then, it is a straightforward exercise (see e.g Appendix Section \ref{Details_ARp}) to show that the restrictions in \eqref{SufficientStats_pureAR2} imply $\sum_{t=1}^T y_{t}y_{t-1}=\sum_{t=1}^T \tilde{y}_{t}\tilde{y}_{t-1}$. In turn, this causes $\gamma_1$ to drop out from the likelihood ratio  $\frac{{\rm Pr}(Y=y | Y_0=y_0, A)}{{\rm Pr}(Y=\tilde y | Y_0=y_0, A)}$. This example illustrates the property that the sufficient statistics resulting from Lemma \ref{lemma:GeneralDynamicsSufficientStats} absorb all variation relevant for identifying $\gamma_1,\ldots,\gamma_{p-1}$. This leaves $\gamma_{p}$ as the only parameter that can be identified via conditional likelihood in the AR($p$) setting.
However, as was the case for the model with heterogeneous time trend in Example \ref{ex:Heterogeneous Time Trend}, identification is possible using an alternative approach that relies on moment conditions. 
\end{remark}

\subsection{Dynamic dyadic network formation}
\label{subsec:DynamicNetworkNoX}

We now consider the dyadic network formation model of \cite{graham2016homophily}, introduced in Example~\ref{ex:DynamicDyadic}. This model falls within the class of generalized autoregressive logit models studied above, and we show how our identification strategy based on Lemma~\ref{lemma:GeneralDynamicsSufficientStats} applies in this setting.

We model binary outcomes $Y_{ij\tau} = Y_{ji\tau} \in \{0,1\}$ for individuals $i,j \in \{1,\ldots,n\}$ with $i \neq j$, and time periods $\tau \in \{1,\ldots,{\cal T}\}$. The initial condition at $\tau=0$ is denoted by $Y^0$. Each dyad-time pair $(i,j,\tau)$ defines a single observation, and we set $T = \binom{n}{2} \cdot \mathcal{T}$ as the total number of observations. Each observation $t = (i,j,\tau)$ is treated as an element of the outcome vector $Y = (Y_1,\ldots,Y_T)' \in \{0,1\}^T$.

To express the model in the form of Assumption~\ref{ass:DynamicModelWithoutX}, we impose a deterministic ordering of dyads within each time period. The overall observation index $t$ respects time ordering, i.e.\ if $\tau_1 < \tau_2$ then all observations from period $\tau_1$ precede those from $\tau_2$. For each observation $t = (i,j,\tau)$, we define the regressor $w_t \in \mathbb{R}^{\binom{n}{2}}$ as a unit vector selecting the dyad $(i,j)$, so that $w_t' A = A_{ij}$, where $A \in \mathbb{R}^{\binom{n}{2}}$ collects all dyad-specific fixed effects. The logit index for observation $t$ is then
\[
\pi_t(Y^{t-1}, \theta) + w_t' A = Y_{ij,\tau-1} \, \gamma + R_{ij,\tau-1} \, \delta + A_{ij},
\]
with parameters $\theta = (\gamma,\delta)'$ and $R_{ij,\tau-1} = \sum_{k \notin \{i,j\}} Y_{ik,\tau-1} Y_{jk,\tau-1}$ denoting the number of shared friends in the previous period.

Throughout, we write $Y_{\cdot \cdot ,\tau}$ for the collection of $Y_{ij,\tau}$ over all dyads, and use ${\cal D}_n$ to denote the set of all ${n \choose 2}$ dyads. Given $Y_{\cdot \cdot ,\tau}$ and $(i,j) \in {\cal D}_n$, the covariates entering the index at time $\tau+1$ are
\[
Z_{ij}(Y_{\cdot \cdot,\tau})
:= \left( Y_{ij,\tau}, \sum_{k \neq i,j} Y_{ik,\tau} Y_{jk,\tau} \right) \in \{0,1\} \times \mathbb{Z}_+.
\]
We now describe the sufficient statistic structure implied by Lemma~\ref{lemma:GeneralDynamicsSufficientStats}. To simplify the exposition, we focus on the minimal case ${\cal T} = 3$, which is also the specification considered in \cite{graham2016homophily}. For each outcome vector $y \in \{0,1\}^T$, we define the set of alternative sequences
\begin{align*}
    {\cal Y}_{\rm cond}(y) = \bigg\{ 
      \tilde y \in \{0,1\}^T \, : \, &  \forall (i,j) \in {\cal D}_n
     \;  \text{we have: } y_{ij,3} = \tilde y_{ij,3}, \\
     &\text{and for all } \tau \in \{1,2\}, \text{ we have: } \\
     &\qquad Z_{ij}(y_{\cdot \cdot,\tau}) = Z_{ij}(\tilde y_{\cdot \cdot,\tau})
     \text{ or } 
     Z_{ij}(y_{\cdot \cdot,\tau}) = Z_{ij}(\tilde y_{\cdot \cdot,3-\tau})
    \bigg\}.
\end{align*}
By Lemma~\ref{lemma:GeneralDynamicsSufficientStats}, the conditional likelihood ratio
${\rm Pr}(Y=y \mid Y^0, A) / {\rm Pr}(Y=\tilde y \mid Y^0, A)$ is invariant to $A$ for all $\tilde y \in {\cal Y}_{\rm cond}(y)$.
This yields the conditional likelihood
\[
\mathcal{L}_{\text{cond}}(\gamma, \delta) = \frac{\prod_{(i,j) \in \mathcal{D}_n} 
\exp\left( \sum_{\tau=2}^{3} Y_{ij,\tau} \, (\gamma Y_{ij,\tau-1} + \delta R_{ij,\tau-1}) \right)
}{
\sum_{\tilde y \in   {\cal Y}_{\rm cond}(Y)} \prod_{(i,j) \in \mathcal{D}_n} 
\exp\left( \sum_{\tau=2}^{3} \tilde{y}_{ij,\tau} \, (\gamma \tilde{y}_{ij,\tau-1} + \delta \tilde R_{ij,\tau-1}) \right)
},
\]
where $\tilde R_{ij,\tau-1} =  \sum_{k \notin \{i,j\}} \tilde{y}_{ik,\tau-1} \cdot \tilde{y}_{jk,\tau-1}$. While this formulation is exact, the denominator is computationally infeasible to evaluate for large $n$, as the set ${\cal Y}_{\rm cond}(y)$ grows exponentially with the number of dyads.

To address this, \cite{graham2016homophily} proposes a restricted subset of ${\cal Y}_{\rm cond}(y)$ based on the concept of ``stable dyads'', which allows feasible computation in large-network settings. An alternative approach, suitable for small $n$ but many independent network observations (as in \citealt{graham2013comment}), is to compute the exact conditional likelihood using the full set ${\cal Y}_{\rm cond}(y)$ or a tractable subset.

A particularly convenient and computationally efficient subset is the two-element set
\begin{align*}
    {\cal Y}^*_{\rm cond}(y) = \bigg\{ 
      \tilde y \in \{0,1\}^T \, : \, &   
        y_{\cdot\cdot,3} = \tilde y_{\cdot\cdot,3}, \\
     &\text{and for all } \tau \in \{1,2\}: \; 
      y_{\cdot\cdot,\tau} = \tilde y_{\cdot\cdot,\tau}
      \text{ or }
         y_{\cdot\cdot,\tau} = \tilde y_{\cdot\cdot,3-\tau} 
    \bigg\}.
\end{align*}
This set contains at most two elements: the original network $y$ and the version with periods $\tau=1$ and $2$ swapped. When $y_{\cdot\cdot,1} = y_{\cdot\cdot,2}$, the set is a singleton. For $n=3$, one can show that ${\cal Y}^*_{\rm cond}(y) = {\cal Y}_{\rm cond}(y)$ for all networks $y$. For $n=4$ and $n=5$, numerical checks confirm that ${\cal Y}^*_{\rm cond}(y) = {\cal Y}_{\rm cond}(y)$ in more than 95\% of all configurations, making this approximation attractive in small-network applications.

In summary, this framework supports two estimation strategies:
Firstly, in settings with large $n$ and few time periods, feasible estimation can be achieved using restricted conditioning sets such as those based on stable dyads (\citealp{graham2016homophily}).
Secondly, in settings with small $n$ but many repeated network observations, exact conditional likelihood estimation is tractable and efficient using either ${\cal Y}_{\rm cond}(y)$ or its simplification ${\cal Y}^*_{\rm cond}(y)$.

The results extend naturally to longer time horizons ${\cal T} > 3$, though the structure of the conditioning sets becomes more complex. Likewise, the framework accommodates richer dynamic specifications, provided the conditions in Lemma~\ref{lemma:GeneralDynamicsSufficientStats} continue to hold.

\subsection{Main Takeaways}

\begin{itemize}

   \item The conditional likelihood approach based on sufficient statistics extends naturally to dynamic models. Lemma~\ref{lemma:GeneralDynamicsSufficientStats} characterizes the sufficient statistics for generalized autoregressive logit models with fixed effects. We have shown how to apply this result in both dynamic panel and dynamic network settings.

   \item  In contrast to the static case, the sufficient statistics approach does not always identify all the common parameters in dynamic models. We illustrated this in two examples  (i) in the AR(1) model with heterogeneous time trends, the sufficient statistics approach fails to identify the autoregressive parameter $\gamma$; (ii) in AR($p$) models with $p > 1$, the sufficient statistics approach can identify only the final lag coefficient $\gamma_p$. In the next section, we discuss an alternative to the sufficient conditions approach, which is based on moment conditions. This approach will lead to identification of all the common parameters in the two examples.

\end{itemize}

\section{Moment conditions for dynamic logit models}
 \label{sec:GeneralDynamic}

 We now turn to an alternative identification strategy for non-linear panel models based on moment conditions that do not depend on the fixed effects. Unlike the conditional likelihood approach used in Sections~\ref{sec:Static} and~\ref{sec:DynamicsNoX}, which relies on sufficient statistics, the moment-based approach developed here applies to a larger class of models and accommodates richer forms of unobserved heterogeneity. 

While fixed-effect-free moment conditions had been constructed in specific models before, the general idea was formalized in the context of semiparametric panel models by \citet{bonhomme2012functional}, who called it ``functional differencing''. Its applicability to discrete choice models was recently emphasized by \citet{kitazawa2021transformations} and \citet{honore2024dynamic}.

\subsection{Model setup}

The model and notation are essentially unchanged
compared to Section~\ref{sec:DynamicsNoX}, the only difference is that we now also allow for additional strictly exogenous covariates $X$,
as we had in Section \ref{sec:Static}.
Assumption~\ref{ass:DynamicModelWithoutX} then generalizes as follows.

\begin{assumption} 
\label{ass:DynamicModel}
The data-generating process is
\begin{align*}       
{\rm Pr} \left( Y=y \, \big| \, Y^0,X,A \right)  &= \prod_{t=1}^T
{\rm Pr} \left( Y_t=y_t \, \big| \, Y^{t-1},X,A \right) ,
\\
{\rm Pr} \left( Y_t=y_t \, \big| \, Y^{t-1},X,A \right)
&=  \frac {\left[\exp(\pi_t(Y^{t-1},X_t,\theta) + w_t' \, A)\right]^{y_t}} {1+\exp(\pi_t(Y^{t-1},X_t,\theta) + w_t' \, A) }   ,
\end{align*}
where $\theta$ is an unknown parameter
and $\pi_t(\cdot,\cdot,\cdot)$ is a known
function for every $t \in \{1,\ldots,T\}$.
\end{assumption}

\medskip
Our goal is again to identify the parameter $\theta$ without imposing restrictions on the distribution of the fixed effect $A$. 
The model in Assumption~\ref{ass:DynamicModel} nests many dynamic panel and network models.
Our baseline Examples~\ref{ex:DynamicPanel}
and~\ref{ex:DynamicDyadic} remain essentially unchanged, except for the additional covariates $X_t$. Thus, in Example~\ref{ex:DynamicPanel} the specification becomes
$$
\pi_t(Y^{t-1}, X_t, \theta) = Y_{t-1} \, \gamma + X_t' \, \beta,
$$
where $\theta=(\gamma,\beta)'$
while generalizing Example~\ref{ex:DynamicDyadic} gives
$$
\pi_t(Y^{t-1}, X_t, \theta) =  Y_{ij,\tau-1} \, \gamma +  R_{ij,\tau-1} \, \delta + X_t' \, \beta
$$
with $\theta = (\gamma, \delta)'$.

It is sometimes possible to extend the identification strategy based on sufficient statistics to models with additional covariates. For example, \citet{honore2000panel} apply this approach to an AR(1) panel logit model with covariates by conditioning on subsets of the data where covariate values are identical across adjacent time periods. More generally, the results in Section~\ref{sec:DynamicsNoX} can be applied to such models by treating $X$ as fixed throughout and absorbing the covariate dependence of $\pi_t(Y^{t-1}, X_t, \theta)$ into the definition of $\pi_t$. However, the sufficiency-based approach typically works only under restrictive support conditions on the covariates. In contrast, the moment condition strategy developed in this section applies more broadly and often accommodates arbitrary variation in $X$.

\subsection{Identification via fixed-effect-free moment conditions}

As discussed above, the sufficient statistics approach typically breaks down in dynamic settings with covariates that vary freely over time. This motivates an alternative strategy based on moment functions that depend on the parameters of interest but not on the fixed effects.
Specifically, we aim to find functions $m(Y, Y^0, X, \theta)$ such that
\begin{align}
   \mathbb{E}\left[
   m(Y, Y^0,X,\theta)
   \, \big| \,  Y^0,X,A \right] = 0.   
   \label{MomentFunctions}
\end{align}
Moment functions satisfying \eqref{MomentFunctions} are valid for identification and estimation, as they are invariant to the fixed effect $A$ and therefore unaffected by the incidental parameter problem. To characterize when such functions exist, we use a combinatorial argument that provides a lower bound on the dimension of the space of fixed-effect-free moment functions,\footnote{See also \cite{dano2023transition} for an alternative strategy tailored to standard AR($p$) logit models with $p \geq 1$.} leading to the following theorem.

To state the theorem, we introduce some additional notation.
For each time period $t \in \{1, \ldots, T\}$, define the set of distinct values that the index $\pi_t(Y^{t-1}, X_t, \theta)$ can take across all possible outcome histories $(y_{t-1},y_{t-2},\ldots,y_1) \in \{0,1\}^{t-1}$  as
$$
  \Pi_t( Y^0,X,\theta) := \big\{ \pi_t(( y_{t-1},y_{t-2},\ldots,y_1,Y^{0}),X_t,\theta) \, :
   \, (y_{t-1},y_{t-2},\ldots,y_1) \in \{0,1\}^{t-1} \big\} ,
$$
Throughout this section, we treat $Y^0$, $X$, and $\theta$ as fixed. While we make these arguments explicit in the notation for mathematical precision, they are otherwise unimportant to the combinatorial structure we analyze and can be regarded as fixed and given (and thus ignored) for the remainder of the discussion.
Let
$$
   Q_t(Y^0, X,\theta) := |\Pi_t(Y^0, X,\theta)|
$$
denote the number of elements in the set $\Pi_t(Y^0, X, \theta)$, that is, the number of distinct values that the index $\pi_t((y_{t-1}, y_{t-2}, \ldots, y_1, Y^0), X_t, \theta)$ can take across different outcome histories. By construction, we have $Q_t(Y^0, X, \theta) \leq 2^{t-1}$, but in many practical models we have limited dependence on past outcomes and $Q_t(Y^0, X, \theta)$ is therefore often (much) smaller. Also, in most models, $Q_t(Y^0, X, \theta)$ remains constant across typical values of $Y^0$, $X$, and $\theta$, but may be different at specific points (e.g., when $\theta = 0$).

Next, define the set of all possible linear combinations of the $w_t$'s, where each coefficient $k_t$ is an integer between 0 and $Q_t(Y^0, X, \theta)$:
\begin{align*}
   {\cal D}(Y^0,X,\theta) := \left\{   \sum_{t=1}^T k_t \, w_t  \, : \,
   (k_1,\ldots,k_T) \in \prod_{t=1}^T \Big\{ 0, 1, \ldots, Q_t(Y^0, X, \theta) \Big\}
   \right\} \subset \mathbb{R}^{d_w}.
\end{align*}
The remark and examples below will help clarify the intuition behind the definition of the set $\mathcal{D}(Y^0, X, \theta)$. As before, we write $|\mathcal{D}(Y^0, X, \theta)|$ to denote its cardinality. The following result provides a lower bound on the number of linearly independent fixed-effect-free moment conditions that exist in the class of dynamic models considered in this section.

\begin{theorem}
   \label{th:ExistenceMoments}
   Let the model be given by
   Assumption~\ref{ass:DynamicModel},
   and let $Y^0$, $X$, $\theta$ be given values of the initial conditions, covariates, and common parameters.
   Then, the number of linearly independent
   moment functions of the form \eqref{MomentFunctions}
   is at least equal to
   $$
      2^T - \left|{\cal D}(Y^0,X,\theta) \right| .
   $$
   In particular, if 
   $\left|{\cal D}(Y^0,X,\theta) \right| < 2^T$,
   then non-zero moment functions of the form 
   \eqref{MomentFunctions} exist.
\end{theorem}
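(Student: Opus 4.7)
\textbf{Proof plan for Theorem~\ref{th:ExistenceMoments}.} The strategy is to reduce the functional identity $\mathbb{E}[m(Y,Y^0,X,\theta)\mid Y^0,X,A]=0$ for all $A\in\mathbb{R}^{d_w}$ to a finite linear system, by multiplying through by a common denominator and then exploiting the linear independence of distinct exponentials in $A$. Throughout, treat $Y^0$, $X$, and $\theta$ as fixed, write $\pi_t(y^{t-1}):=\pi_t(y^{t-1},X_t,\theta)$, and write $\Pi_t$, $Q_t$, and $\mathcal{D}$ for $\Pi_t(Y^0,X,\theta)$, $Q_t(Y^0,X,\theta)$, and $\mathcal{D}(Y^0,X,\theta)$, respectively.

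First I would use Assumption~\ref{ass:DynamicModel} to expand the moment condition as
$$
\sum_{y\in\{0,1\}^T} m(y)\,\mathrm{Pr}(Y=y\mid Y^0,X,A)=0,\qquad\text{for all }A,
$$
with
$$
\mathrm{Pr}(Y=y\mid Y^0,X,A)=\frac{\exp\!\big(\sum_t y_t\pi_t(y^{t-1})\big)\,\exp\!\big((\sum_t y_t w_t)'A\big)}{\prod_{t=1}^T\bigl(1+\exp(\pi_t(y^{t-1})+w_t'A)\bigr)}.
$$
The denominator depends on $y$, which I would eliminate by multiplying both sides by the strictly positive common factor
$$
\mathcal{Z}(A):=\prod_{t=1}^T\prod_{\pi\in\Pi_t}\bigl(1+e^{\pi+w_t'A}\bigr).
$$
Since $\pi_t(y^{t-1})\in\Pi_t$ for every admissible history, the product cancels cleanly to give the equivalent condition
$$
\sum_{y\in\{0,1\}^T} m(y)\,N(y,A)=0,\quad N(y,A):=e^{\sum_t y_t\pi_t(y^{t-1})}\,e^{(\sum_t y_t w_t)'A}\!\!\prod_{t=1}^T\prod_{\substack{\pi\in\Pi_t\\ \pi\neq\pi_t(y^{t-1})}}\!\!\bigl(1+e^{\pi+w_t'A}\bigr).
$$

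The combinatorial heart of the argument is to expand $N(y,A)$ as a linear combination of exponentials $e^{v'A}$ and to check that every resulting $v$ lies in $\mathcal{D}$. Expanding each inner product by choosing, for each $(t,\pi)$ with $\pi\in\Pi_t\setminus\{\pi_t(y^{t-1})\}$, either the $1$ or the $e^{\pi+w_t'A}$ summand, one sees that each monomial of $N(y,A)$ has the form (constant in $A$) $\times\,e^{v'A}$ with
$$
v=\sum_{t=1}^T\bigl(y_t+k_t^{(y)}\bigr)\,w_t,\qquad k_t^{(y)}\in\{0,1,\ldots,Q_t-1\}.
$$
Since $y_t\in\{0,1\}$, the integer $y_t+k_t^{(y)}$ lies in $\{0,1,\ldots,Q_t\}$, which is exactly the range appearing in the definition of $\mathcal{D}$; hence $v\in\mathcal{D}$.

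Finally, since the characters $\{e^{v'A}:v\in\mathcal{D}\}$ are linearly independent as functions of $A\in\mathbb{R}^{d_w}$ (distinct exponentials), the condition $\sum_y m(y)N(y,A)=0$ for all $A$ is equivalent to the vanishing of the coefficient of $e^{v'A}$ for each $v\in\mathcal{D}$. This is a system of at most $|\mathcal{D}|$ linear equations in the $2^T$ unknowns $\{m(y):y\in\{0,1\}^T\}$, so its solution space has dimension at least $2^T-|\mathcal{D}|$, which is the claim. The main technical obstacle is the bookkeeping in the expansion step: one must verify that excluding the factor $\pi_t(y^{t-1})$ from the inner product over $\Pi_t$ is exactly what brings the admissible range of $k_t^{(y)}$ down from $\{0,\ldots,Q_t\}$ to $\{0,\ldots,Q_t-1\}$, so that after adding $y_t$ the resulting coefficient still lies in the defining range of $\mathcal{D}$. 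Once this is in place, the rest reduces to standard linear-algebraic dimension counting.
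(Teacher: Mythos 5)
Your proposal is correct and follows essentially the same route as the paper's own proof: clearing the common denominator $\mathcal{Z}(A)$ is exactly the paper's factorization ${\rm Pr}(Y=y\mid\cdot)=\kappa(a)\,\phi(y,A)$, your $N(y,A)$ is the paper's $\phi(y,A)$, and the power count $y_t+k_t\in\{0,\ldots,Q_t\}$ matches the paper's polynomial degree bound on $\phi_t(y,a_t)$, leading to the same conclusion that the moment condition imposes at most $|\mathcal{D}(Y^0,X,\theta)|$ linear constraints on the $2^T$ values of $m$. (The only cosmetic difference is that you invoke full linear independence of the exponentials to get an equivalence, whereas the lower bound only needs the sufficiency direction, which is all the paper uses.)
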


Theorem~\ref{th:ExistenceMoments} provides a sufficient condition for the existence of non-zero moment functions $m(Y, Y^0, X, \theta)$ satisfying \eqref{MomentFunctions}. However, the result does not characterize the structure of these functions, nor does it guarantee that they depend on $\theta$ in a way that allows identification. In practice, constructing such moment functions, and verifying that they identify $\theta$, is typically model-specific and requires additional algebraic or numerical insight. Concrete examples are discussed below.

Even so, the existence result in Theorem~\ref{th:ExistenceMoments} is useful. For example, once existence is guaranteed, one can apply the general functional-analytic framework developed by \citet{bonhomme2012functional} to compute the moment conditions numerically. In this ``functional differencing'' approach, moment functions can be approximated even when closed-form expressions are unavailable.

Whether obtained analytically or numerically, valid moment functions $m(Y, Y^0, X, \theta)$ can be used to construct GMM estimators that are root-$n$ consistent under standard regularity conditions.

\begin{remark}
The static logit model provides a useful special case for understanding the structure and role of the set $\mathcal{D}(Y^0, X, \theta)$. In that model, we have $Q_t(Y^0, X, \theta) = 1$ for all $t$, so the set reduces to

$$
\mathcal{D} = \left\{ \sum_{t=1}^T y_t w_t : y \in \{0,1\}^T \right\},
$$
which is simply the set of all possible values that the sufficient statistic $\mathsf{W} Y$ can take. As discussed in Section~\ref{sec:Static}, identification in the static model hinges on whether the mapping $Y \mapsto \mathsf{W} Y$ is injective. If $\mathsf{W} Y$ takes a distinct value for every realization $y \in \{0,1\}^T$, i.e., if $|\mathcal{D}| = 2^T$, then the existence of the sufficient statistic is not useful for identification or estimation of $\theta$, and no fixed-effect-free moment conditions exist. This explains the condition $|\mathcal{D}| < 2^T$ in Theorem~\ref{th:ExistenceMoments}, which guarantees the existence of nontrivial moment functions precisely when the sufficient statistic does not uniquely index every outcome configuration --- that is, when different realizations of $Y$ map to the same value of $\mathsf{W} Y$.

More generally, when sufficient statistics are not available, as in most dynamic models with general covariate values $X$, the set $\mathcal{D}(Y^0, X, \theta)$ still plays a closely analogous role. Although the linear combinations $\sum_{t=1}^T k_t w_t$ that define $\mathcal{D}$ can no longer be interpreted as sufficient statistics (since the coefficients $k_t$ are not restricted to binary values), they retain a similar structure and capture aspects of how the model links outcome histories to the fixed effects. The condition $|\mathcal{D}| < 2^T$ remains unchanged, but now serves only as a sufficient condition for the existence of nontrivial moment functions. In this sense, Theorem~\ref{th:ExistenceMoments} extends the logic of sufficiency-based identification to dynamic settings where no sufficient statistics exist.
\end{remark}

\subsection{Examples}

\subsubsection{Panel AR($p$) models with general covariates}

We first consider the class of AR(\(p\)) panel logit models with covariates and scalar fixed effects. These models are covered by Assumption~\ref{ass:DynamicModel}, with the index and fixed effect specification given by
\begin{align*}
\pi_t(Y^{t-1}, X_t, \theta) &= \sum_{r=1}^p  Y_{t-r} \, \gamma_r +  X_t' \, \beta,
&
w_t=1 ,
\end{align*}
where $\theta = (\gamma_1, \ldots, \gamma_p,\beta')'$. The initial conditions $Y^0 = (Y_{-p+1}, \ldots, Y_0) \in \{0,1\}^p$ are treated as fixed and observed.

To apply Theorem~\ref{th:ExistenceMoments}, we compute the number of distinct values that the index $\pi_t(Y^{t-1}, X_t, \theta)$ can take across binary outcome histories. Assuming $\gamma_r \neq 0$ for all $r \in \{1,\ldots,p\}$, this number is
$$
    Q_t = 2^{\min(p,\,t-1)},
$$
since the index depends on the most recent binary $p$ outcomes, or fewer in the initial periods.
Because we are considering $w_t = 1$ for all $t$, the set $\mathcal{D} = \mathcal{D}(Y^0, X, \theta)$ consists of all integers between 0 and the maximum sum $d_{\max} := \sum_{t=1}^T Q_t$. Therefore, we have
\begin{align*}
  |\mathcal{D} | &= 1 + \sum_{t=1}^T Q_t 
  \\
     &= 1 + \sum_{t=1}^T 2^{\min(p,\,t-1)}
     = \underbrace{1+\sum_{t=1}^{p} 2^{t-1}}_{=2^p} + 2^p \, (T - p)  
     \\
     &= 2^p (T + 1 - p).
\end{align*}
By Theorem~\ref{th:ExistenceMoments}, the number of linearly independent fixed-effect-free moment functions is therefore at least
$$
2^T - 2^p (T + 1 - p),
$$
which is strictly positive whenever $T \geq 2+p$ (in addition to $p$ observed time periods as initial conditions).
It turns out that this is not merely a lower bound: for generic values of the covariates $X$, this is in fact the exact number of linearly independent moment conditions in the AR(\(p\)) panel model. However, as we will see in the following example, additional moment conditions may become available for specific values of $X$, in particular when $X = 0$.

The result for the number of linearly independent moment conditions in the AR($p$) model was stated in \citet{honore2024dynamic}, who also derive explicit analytical expressions for the corresponding moment functions in the cases $p = 1, 2, 3$, and provide a detailed analysis of identification and estimation for $p = 1$, building on the results of \citet{kitazawa2021transformations}. The sharpness of the moment count has been established for $p = 1$ and $p = 2$ by \citet{kruiniger2020further}, and further discussed in \citet{dobronyi2021identification}. For general $p$, \citet{dano2023transition} proves that the bound is sharp under free-varying covariates and provides analytical expressions for all the moment functions in closed form.

\subsubsection{Panel AR(2) model without covariates}

We now consider a special case of the previous example where $p = 2$ and no covariates are present. The model takes the form
$$
\Pr(Y_t = 1 \mid Y_{t-1}, Y_{t-2}, A) = 
\frac{ \exp\left( Y_{t-1} \gamma_1 + Y_{t-2} \gamma_2 + A \right) }
{1 + \exp\left( Y_{t-1} \gamma_1 + Y_{t-2} \gamma_2 + A \right)},
$$
which corresponds to the structure in Assumption~\ref{ass:DynamicModel} with
\begin{align*}
\pi_t(Y^{t-1}, X_t, \theta) &= Y_{t-1} \gamma_1 + Y_{t-2} \gamma_2,
&
w_t &= 1.
\end{align*}
In this model for $T = 3$, one can construct one valid fixed-effect-free moment function for each value of the inital condition $Y^0 = (Y_{-1}, Y_0) \in \{0,1\}^2$. For instance, when $y^0 = (0,0)$, the following function satisfies the moment condition \eqref{MomentFunctions}:
$$
m(y, y^0, \theta) = 
\begin{cases}
\exp(-\gamma_1) & \text{if } y = (0,1,1), \\
1 & \text{if } y = (0,1,0), \\
-1 & \text{if } (y_1, y_2) = (1,0), \\
0 & \text{otherwise},
\end{cases}
$$
and when $y^0 = (0,1)$, a valid moment function is given by
$$
m(y, y^0, \theta) = 
\begin{cases}
\exp(\gamma_2 - \gamma_1) & \text{if } y = (1,0,0), \\
\exp(\gamma_2) & \text{if } y = (1,0,1), \\
-1 & \text{if } (y_1, y_2) = (0,1), \\
0 & \text{otherwise},
\end{cases}
$$
see  Section~3.3 of the first arXiv version of \citet{honore2024dynamic}, who provide closed-form moment functions for the AR($p$) model with $T = 3$ and $X_2=X_3$. 

Firstly, this example shows that the lower bound in Theorem~\ref{th:ExistenceMoments} is not always sharp. For $p = 2$ and $T = 3$, we have $2^T - |\mathcal{D}| = 0$, so the theorem does not guarantee the existence of any fixed-effect-free moment functions. Nevertheless, such functions do exist, as demonstrated above.

Secondly, the first moment function implies that $\gamma_1$ is point-identified in this model, since $\exp(-\gamma_1)$ is strictly monotonic in $\gamma_1$, and thus the moment function is strictly monotonic as well. Once $\gamma_1$ is identified, the second moment function then identifies $\gamma_2$ through the same logic.
This result provides analytical confirmation of earlier numerical findings in \citet{honore2019identification} that suggested identification might be possible in this setup, even though conditional likelihood methods do not apply.

Thus, the moment condition approach point-identifies both $\gamma_1$ and $\gamma_2$ in this model as soon as $T \geq 3$. By contrast, as noted in Remark~\ref{Remark:ARp:conditionalMLE}, the sufficient statistics approach can identify only $\gamma_2$, and requires at least $T \geq 4$ to do so.

\subsubsection{Panel AR(1) with quarterly fixed effects and covariates}
Next, consider the extension of Example \ref{ExampleQuarterlyFixedEffects} with strictly exogenous covariates. For this specification, we have $\pi_t(Y^{t-1}, X_t, \theta) =  Y_{t-1} \gamma + X_t' \beta$ and $w_t=(w_{t,1},w_{t,2},w_{t,3},w_{t,4})$ where $w_{t,q}$ is an indicator for quarter $q\in \{1,2,3,4\}$. It follows that 
\begin{align*}
\mathcal{D}(Y^0, X,\theta) &=  \Bigg\{
(k_1+\sum_{\substack{ t=2 \\ t\equiv 1 (\text{mod } 4)}}^T k_t, 
\sum_{\substack{ t=2 \\ t\equiv 2 (\text{mod } 4)}}^T k_t, 
\sum_{\substack{ t=2 \\ t\equiv 3 (\text{mod } 4)}}^T k_t, 
\sum_{\substack{ t=2 \\ t\equiv 0 (\text{mod } 4)}}^T k_t) : \\
&  k_1 \in \{0,1\} \text{ and } k_t \in \{0,1,2\} \; \forall t \geq 2
\Bigg\}
\end{align*}
 with cardinality $|\mathcal{D}(Y^0, X,\theta)| = \left(2 \lfloor \frac{T - 1}{4}\rfloor+2\right)\left(2 \lfloor \frac{T - 2}{4}\rfloor+3\right)\left(2 \lfloor \frac{T - 3}{4}\rfloor+3\right)\left(2 \lfloor \frac{T}{4}\rfloor+1\right)$. Hence, the bound  $2^T-|\mathcal{D}(Y^0, X,\theta)|$ of Theorem \ref{th:ExistenceMoments} becomes positive as soon as $T\geq 12$, ensuring the existence of moments. While informative, this bound is conservative in this instance since in the absence of regressors (which does not alter the bound here), Example \ref{ExampleQuarterlyFixedEffects} already indicated the existence of identifying moments for $T=6$. 
Indeed, one can construct two linearly independent moment functions $m_{1}$ and $m_{2}$ with $T=6$ periods. Using the shorthand $x_{ts}=x_{t}-x_{s}$ for $t\neq s$, $m_1$ is given by 
\begin{align*}
m_{1}(y,y_0,x,\theta)= \begin{cases}
  \exp(\gamma\left[(1-y_{0})+1\right]+(x_{51}+x_{26})'\beta) & \text{if } (y_1,y_2,y_4,y_5,y_6)=(1,0,1,0,1) \\
    \exp(\gamma(1-y_{0})+(x_{51}+x_{26})'\beta) & \text{if } (y_1,y_2,y_4,y_5,y_6)=(1,0,0,0,1) \\
    \exp(\gamma(1-y_{0})+x_{51}'\beta) & \text{if } (y_1,y_2,y_4,y_5,y_6)=(1,0,1,0,0) \\
    \exp(-\gamma y_{0}+x_{51}'\beta) & \text{if } (y_1,y_2,y_4,y_5,y_6)=(1,0,0,0,0) \\
    \exp(x_{26}'\beta)-1 & \text{if } (y_1,y_2,y_5,y_6)=(0,0,0,1) \\
    -1  & \text{if } (y_1,y_2,y_5)=(0,0,1) \\
    \exp(-\gamma y_{0}+(x_{51}+x_{62})'\beta) & \text{if } (y_1,y_2,y_4,y_5,y_6)=(1,1,1,0,0) \\
    \exp(-\gamma (1+y_{0})+(x_{51}+x_{62})'\beta) & \text{if } (y_1,y_2,y_4,y_5,y_6)=(1,1,0,0,0) \\
    \exp(\gamma (1-y_{0})+x_{51}'\beta) & \text{if } (y_1,y_2,y_4,y_5,y_6)=(1,1,1,0,1) \\
    \exp(-\gamma y_{0}+x_{51}'\beta) & \text{if } (y_1,y_2,y_4,y_5,y_6)=(1,1,0,0,1) \\
    \exp(x_{62}'\beta)-1 & \text{if } (y_1,y_2,y_5,y_6)=(0,1,0,0) \\
    -1 & \text{if } (y_1,y_2,y_5)=(0,1,1) \\
    0 & \text{otherwise, }
\end{cases}
\end{align*}and $m_{2}(y, y_0,x,\theta)$ is obtained by substituting $y_{t}$ by $(1-y_{t})$ for $t=0,\ldots, 6$ and $x$ by $-x$ in the expression of $m_{1}(y, y_0,x,\theta)$. In other words, $m_{2}(y, y_0,x,\theta)= m_{1}(1-y, 1-y_0,-x,\theta)$. We refer readers to Appendix \ref{moments_AR1_quarterlyFE} for detailed derivations of these expressions. Remark that the two moment functions depend on the common parameter $\theta=(\gamma,\beta')'$.

\subsubsection{Moment conditions in \citet{graham2016homophily} with additional covariates}
As an illustration of how the lower bound of Theorem~\ref{th:ExistenceMoments} can be applied to probe the existence of moment conditions in networks, consider the extension of \citet{graham2016homophily} introduced earlier, incorporating strictly exogenous covariates $X$. In this setting, we specify $\pi_t(Y^{t-1}, X_t, \theta) =  Y_{ij,\tau-1} \gamma + R_{ij,\tau-1}  \delta + X_t' \beta$, where $t=(i,j,\tau)$ indexes dyad-time pairs, and $w_t$ denotes the basis vector in $\mathbb{R}^{\binom{n}{2}}$ with entry one for dyad $(i,j)$, and zeros elsewhere. Under this formulation, the model described in Assumption~\ref{ass:DynamicModel} becomes
\begin{align*}
\Pr(Y_{ij\tau} = 1 \mid Y_{ij,\tau-1}, R_{ij,\tau-1},X_t ,A_{ij}) = 
\frac{ \exp\left(  Y_{ij,\tau-1}\gamma +  R_{ij,\tau-1}\delta + X_t' \beta+ A_{ij} \right) }
{ 1 + \exp\left(  Y_{ij,\tau-1}\gamma +  R_{ij,\tau-1}\delta + X_t' \beta+ A_{ij} \right) }
\end{align*}
Since $R_{ij,\tau-1} := \sum_{k \neq i,j} Y_{ik,\tau-1} Y_{jk,\tau-1}$, for any fixed covariates $X$,  the structure of $\pi_t$ implies that it can take at most $2(n-1)$ distinct values. Consequently, we have 
\begin{align*}
    Q_t(Y^0, X,\theta) = 
    \begin{cases}
        1 & \text{if } \tau = 1 \\
        2(n - 1) & \text{if } \tau \geq 2
    \end{cases}
\end{align*} and hence
\begin{align*}
   {\cal D}(Y_0, X) 
   = \Bigg\{ \sum_{t=1}^T k_t \, w_t, \text{ such that } 
   &\forall t = (i, j, \tau), \\
   &\text{if } \tau = 1, \quad k_t \in \{0,1\}, \\
   &\text{if } \tau \geq 2, \quad k_t \in \{0,1,\ldots, 2(n-1)\} 
   \Bigg\}.
\end{align*}
A straightforward counting argument then gives $|\mathcal{D}(Y^0, X,\theta)| = \left[2(\mathcal{T}-1)(n-1)+2\right]^{\binom{n}{2}}$ and Theorem~\ref{th:ExistenceMoments} implies the existence of at least $2^T - \left|{\cal D}(Y_0,X) \right|=2^T-\left[2(\mathcal{T}-1)(n-1)+2\right]^{\binom{n}{2}}$ moment functions free from the fixed effects. Since $\log |\mathcal{D}(Y^0, X,\theta)|$ grows approximately proportionally to  $\log \mathcal{T}$ - the number of time periods - while $\log(2^T)=\log(2)\binom{n}{2}\mathcal{T}$ grows linearly in $\mathcal{T}$, the existence of moment conditions is guaranteed for sufficiently large $\mathcal{T}$. For example, in the simplest case with $n=3$ agents, the lower bound is positive as soon as $T\geq 4$. In fact, \cite{dano2023transition} shows that there is generally as much as $\binom{n}{2}$ fixed-effect-free moment conditions with only $T=3$ periods that are explicitly given by:
\begin{align*}
    m_{y}(Y, Y^0,X,\theta) &= \mathds{1}\{Y_2=y\}\exp\left\{\sum_{i<j}(Y_{ij,3}-y_{ij})\left[\gamma (Y_{ij,1}-y_{ij}) + \delta (R_{ij,1}-r_{ij}) - (X_{ij,3}-X_{ij,2})' \beta\right]\right\} \\
    &\times \exp\left\{-\sum_{i<j}(Y_{ij,1}-y_{ij})\left[\gamma (Y_{ij,0}-y_{ij}) + \delta (R_{ij,0}-r_{ij}) - (X_{ij,3}-X_{ij,1})' \beta\right]\right\} \\
    &-\mathds{1}\{Y_1=y\}
\end{align*}
where $y$ denotes an undirected network, and $r_{ij}:= \sum_{k \neq i,j} y_{ik} y_{jk}$ denotes the number of friends that agents $i$ and $j$ share in common in $y$.

\subsection{Main Takeaways}

\begin{itemize}

\item
The moment function approach developed in this section --- also known as functional differencing, following \citet{bonhomme2012functional} --- provides a powerful alternative to the conditional likelihood strategy based on sufficient statistics. In particular, the examples above show that fixed-effect-free moment conditions exist even in dynamic models with arbitrary covariate variation, where sufficient statistics approaches typically fail. This includes models with general time-varying covariates, heterogeneous time trends, and rich dynamic structures.

\item
Theorem~\ref{th:ExistenceMoments} offers a general and easy-to-verify sufficient condition for the existence of such fixed-effect-free moment functions. While the lower bound it provides is not always sharp, it guarantees the existence of moment conditions for sufficiently large $T$ in a broad class of models. For example, in the AR(2) model without covariates, the theorem does not predict the existence of moments at $T=3$, but still confirms their existence for $T\geq 4$.

\item
Even when moment conditions are known to exist, their explicit construction and the demonstration that they identify the parameters of interest typically require model-specific derivations. In some models, analytical expressions can be derived, while in others, numerical methods, as in \citet{bonhomme2012functional}, may be required.

\item
Unlike the sufficiency-based approach, which relies on conditional likelihood and is typically estimated via conditional MLE, the moment-based approach leads naturally to GMM estimation. Once a sufficient number of valid moment conditions are constructed, and the model parameters are identified, standard GMM theory yields root-$n$ consistent and asymptotically normal estimators under appropriate regularity conditions.

\end{itemize}

\section{Conclusions}
\label{sec: Conclusion}

This paper has reviewed and extended two approaches for eliminating fixed effects in logit models: the conditional likelihood method and the construction of moment conditions. While the results are conceptually clean and methodologically promising, several important challenges remain, pointing to avenues for future research.

First, the moment-based framework often yields a large number of conditional moment conditions, linking it to the broader literature on optimal moment selection in panel data. Prior work, including \citet{bekker1994alternative}, \citet{donald2001choosing}, \citet{alvarez2003time}, and \citet{okui2009optimal}, has shown that using too many valid moments can introduce substantial finite-sample bias. This suggests that future research on nonlinear models should integrate identification strategies with principled approaches to moment selection.

Second, our focus has been on identifying and estimating structural parameters, rather than computing counterfactuals or marginal effects. In panel data settings, these quantities are typically not point-identified, even when the structural parameters are. Extending the methods discussed here to incorporate bounds on marginal effects, such as those proposed by \cite{PakelWeidner2024} and \cite{davezies2024identification}, would enhance the empirical relevance of the fixed effects framework.

Finally, and perhaps most critically, the models considered here assume strict exogeneity of the explanatory variables (aside from lagged outcomes). This assumption is often unrealistic in economic applications. While recent work by \cite{ArellanoCarrasco2003}, \cite{botosaru2024adversarial}, \cite{bonhomme2023identification}, and \cite{BonhommeDanoGraham2025} has made progress in relaxing this assumption, much remains to be done to develop robust methods that accommodate predetermined regressors in nonlinear panel models.

\appendix

\section{Proofs}

\begin{proof}[Proof of Lemma~\ref{lemma:Wperp}]
   Assume first that there exist $y_1$, $y_2$
   satisfying the assumptions in the lemma.
   In that case $w_{\perp}=y_1 - y_2$ satisfies 
   $w_{\perp} \in \{-1,0,1\}^T$ and $\mathsf{W} w_{\perp} = 0$.
   We have thus shown the 
   ``only if direction'' of the lemma.

   Conversely, let $w_{\perp} \in \{-1,0,1\}^T$ be such that $w_{\perp} \neq 0$ and $\mathsf{W} w_{\perp} = 0$.
   Define $y_1$ and $y_2$ to be the 
   $T$-vectors  with components $y_{1t}=\mathbbm{1}(w_{\perp,t}=1)$
   and $y_{2t}=\mathbbm{1}(w_{\perp,t}=-1)$,
   for $t \in \{1,\ldots,T\}$.
   Since $w_{\perp} \neq 0$ we have  $y_1 \neq y_2$.
   Since the definition of $y_1$ and $y_2$ implies $w_{\perp}=y_1-y_2$, and
    $\mathsf{W} w_{\perp} = 0$,
   we have $\mathsf{W}y_1 = \mathsf{W} y_2$.  We have thus shown the 
   ``if direction'' of the lemma.
\end{proof}

\begin{remark}
Notice that in the last paragraph of the proof, the 
   choice of $y_1$ and $y_2$ was somewhat
   arbitrary: For $t \in \{1,\ldots,T\}$
   with $w_{\perp,t}=0$ we chose 
   $y_{1t} = y_{2t} = 0$, but 
   $y_{1t} = y_{2t}$ (equal to zero or one) would have been  sufficient to deliver
   the desired result. 
   In other words, $w_{\perp,t}$ generally does not
   determine  $y_1$ and $y_2$ uniquely. However, purely from an identification perspective
   (completely ignoring finite sample properties), having multiple pairs $y_1$, $y_2$
   corresponding to the same $w_{\perp,t}$ is actually not useful, because all those pairs identify
   the same parameter component $\beta' X w_{\perp,t}$. In that sense, from an identification perspective, the $w_{\perp}$ can
   be viewed as more fundamental objects than the pairs $y_1$, $y_2$.
\end{remark}

\begin{proof}[Proof of Theorem~\ref{th:IdentificationStatic}]
   Let $w_\perp \in \{-1,0,1\}^T$ be a column of $\mathsf{W}_{\perp}$, and let $y_1, y_2 \in \{0,1\}^T$ be a corresponding outcome pair as guaranteed by Lemma~\ref{lemma:Wperp}. By construction, we have $w_\perp = y_1 - y_2$. Then, using equation~\eqref{SufficientStats1}, we obtain
$$
   \log\left(
   \frac{1}
   {\left[{\rm Pr} \left( Y = y_1 \, \big| \, X, Y \in \{y_1, y_2\} \right)\right]^{-1} - 1}
   \right)
   = \beta' X w_\perp,
$$
which shows that $\beta' X w_\perp$ is identified from the data.
Since this holds for every column of $\mathsf{W}_\perp$, it follows that the vector
$
z(X) := \beta' X \mathsf{W}_\perp
$
is point-identified from the distribution of the data.

Now suppose, contrary to identification, that there exists an alternative parameter vector $\widetilde \beta \neq \beta$ such that all assumptions of the theorem are satisfied. Then by the same logic, we must also have
$
z(X) = \widetilde \beta' X \mathsf{W}_\perp,
$
which implies
$$
(\beta - \widetilde \beta)' X \mathsf{W}_\perp = 0 \quad \text{a.s.}
$$
Let $b := \beta - \widetilde \beta$. Then the above equation reads $b' X \mathsf{W}_\perp = 0$ almost surely, contradicting the non-collinearity condition assumed in the theorem.
Therefore, no such $\widetilde \beta \neq \beta$ can exist, and $\beta$ is identified.
\end{proof}   

\begin{proof}[\bf Proof of Lemma~\ref{lemma:GeneralDynamicsSufficientStats}]
 Write $\pi_t$ for $\pi_t(y^{t-1},\theta)$, and $\tilde \pi_t$
for $\pi_t(\tilde y^{t-1},\theta)$.
By assumption (i) of the lemma, the likelihood ratio simplifies to:
$$
\frac{{\rm Pr}(Y=y | Y^0=y^0, A)}{{\rm Pr}(Y=\tilde y | Y^0=y^0, A)} =  \exp\left\{   \sum_{t=1}^{T}\left[ y_{t} \, \pi_t(y^{t-1},\theta) - \tilde y_{t} \, \pi_t(\tilde y^{t-1},\theta) \right] \right\} \prod_{t=2}^T \frac{1+\exp( \tilde \pi_t + w_t' A)}{1+\exp(\pi_t + w_t' A)} .
$$
Under assumption (ii) of the lemma, all the terms in the last product cancel pairwise due to the permutation condition, that is,
$$
 \prod_{t=2}^T \frac{1+\exp( \tilde \pi_t + w_t' A)}{1+\exp(\pi_t + w_t' A)} = 1 .
$$ 
Combining the last two displays gives the desired result.
\end{proof}

\begin{proof}[\bf Proof of Theorem~\ref{th:AR1Identification}]
The theorem is almost an immediate corollary of Lemma~\ref{lemma:GeneralDynamicsSufficientStats}. The only subtlety is that for part (i) of the theorem, we need to show that assumption (ii) of Lemma~\ref{lemma:GeneralDynamicsSufficientStats} holds for this model, that is, that 
$\Big[ \big(w_t,y_{t-1}\big) \, : \, t=2,\ldots,T\Big] $ 
is a permutation of  
$\Big[ \big(w_t,\widetilde y_{t-1}\big) \, : \, t=2,\ldots,T\Big]$.
To establish this, note that the sufficient statistic $(\mathsf{W}Y, \mathsf{W}Y_{\rm lag})$ provides us with the constraints:
\begin{align}
\sum_{t=1}^T w_t y_t &= \sum_{t=1}^T w_t \tilde{y}_t, \label{constraint1}\\
\sum_{t=1}^T w_t y_{t-1} &= \sum_{t=1}^T w_t \tilde{y}_{t-1}. \label{constraint2}
\end{align}
The key insight is that each $w_t$ is a standard basis vector in $\mathbb{R}^{d_w}$ due to the restrictions in \eqref{RestrictionWbinary}. Specifically, we can write $w_t = e_{j_t}$ where $j_t \in \{1,\ldots,d_w\}$ and $e_j$ is the $j$-th standard basis vector.
Under this representation, constraints \eqref{constraint1} and \eqref{constraint2} become:
\begin{align}
\text{For each } j \in \{1,\ldots,d_w\}: \quad \sum_{t: j_t = j} y_t &= \sum_{t: j_t = j} \tilde{y}_t, \label{group_constraint1}\\
\text{For each } j \in \{1,\ldots,d_w\}: \quad \sum_{t: j_t = j} y_{t-1} &= \sum_{t: j_t = j} \tilde{y}_{t-1}. \label{group_constraint2}
\end{align}
Since $y_t, \tilde{y}_t \in \{0,1\}$, constraints \eqref{group_constraint1} and \eqref{group_constraint2} imply that for each group of time periods with the same $w_t$ value (i.e., for each $j \in \{1,\ldots,d_w\}$):
\begin{itemize}
\item The number of ones in $\{y_t : t \in \{1,\ldots,T\}, j_t = j\}$ equals the number of ones in $\{\tilde{y}_t : t \in \{1,\ldots,T\}, j_t = j\}$.
\item The number of ones in $\{y_{t-1} : t \in \{2,\ldots,T\}, j_t = j\}$ equals the number of ones in $\{\tilde{y}_{t-1} : t \in \{2,\ldots,T\}, j_t = j\}$.
\end{itemize}

This means that within each group defined by $w_t = e_j$, the binary values can be rearranged such that the pairs $(w_t, y_{t-1})$ from the $y$ sequence match the pairs $(w_t, \tilde{y}_{t-1})$ from the $\tilde{y}$ sequence in terms of their frequency distribution.

Since this matching property holds for each group separately, and the groups partition the index set $\{2,\ldots,T\}$, we conclude that the sequences $\Big[ \big(w_t,y_{t-1}\big) \, : \, t=2,\ldots,T\Big]$ and $\Big[ \big(w_t,\tilde{y}_{t-1}\big) \, : \, t=2,\ldots,T\Big]$ are permutations of each other.

Therefore, condition (ii) from Lemma~\ref{lemma:GeneralDynamicsSufficientStats} is satisfied. Combined with the fact that condition (i) of the lemma follows directly from constraint \eqref{constraint1}, we can apply the lemma to establish that the distribution ratio does not depend on $A$, which proves part (i) of the theorem.

Part (ii) follows immediately from the identification condition in the lemma and the specific form $\pi_t(y^{t-1},\theta) = y_{t-1}\gamma$ in the AR(1) model.
\end{proof}

\begin{proof}[Proof of Theorem~\ref{th:ExistenceMoments}]
We drop most arguments $Y_0$, $X$ throughout this proof.
We also use 
$y^{t-1}$ simply to denote the vector
$(y_{t-1},y_{t-2},\ldots,y_1)$
(i.e. dropping $Y^0$).
Define $a_t:=\exp(w_t' \, A)$
and $b_{t,q}:=\exp(\pi_{t,q})$,
where $\pi_{t,q}$ denotes the elements of $\Pi_t$, with $q \in \{1,\ldots,Q_t\}$.
Also, let $q(y^{t-1})$ be such that 
$b_{t,q(y^{t-1})}=\exp(\pi_t(y^{t-1}))$.
Then, 
\begin{align*}
  {\rm Pr} \left( Y=y \, \big| \, Y^0,X,A \right)  &= \prod_{t=1}^T \frac {\left( b_{t,q(y^{t-1})} \, a_t \right)^{y_t}} {1+b_{t,q(y^{t-1})} \, a_t  }
  \\
  &= 
  \underbrace{
  \left[\prod_{t=1}^T
  \prod_{q=1}^{Q_t}
     \frac 1 {{1+b_{t,q} \, a_t  }}
  \right]}_{=:\kappa(a)}
  \prod_{t=1}^T 
  \underbrace{
  (b_{t,q(y^{t-1})})^{y_t}  (a_t)^{y_t}
   \prod_{q\in\{1,\ldots,Q_t\} \setminus q(y^{t-1})}
     ({1+b_{t,q} \, a_t  })
     }_{=:\phi_t(y,a_t)}
  \\
  &= \kappa(a) \underbrace{
  \prod_{t=1}^T  \phi_t(y,a_t) }_{=:\phi(y,A)},
\end{align*}
where for each $t$ the corresponding 
\begin{align*}
   \phi_t(y,a_t) = \sum_{k=0}^{Q_t} c_{t,k}(y) a_t^{k}
\end{align*}
is a polynomial in $a_t$ with $Q_t+1$ powers between $a_t^0$ and $a_t^{Q_t}$. 
Using this we find that
\begin{align*}
    \phi(y,A) = \sum_{d \in {\cal D}}
    \tilde c_d(y)  \exp(d'A)
\end{align*}
Therefore, a moment function satisfies
\eqref{MomentFunctions} 
for all $A$ if it is orthogonal (in the $2^T$
dimensional outcome space)
to all $\tilde c_d$ vectors,
which is one linear condition on the moment
function for every $d \in {\cal D}$.
The number of solutions is therefore at least
$2^T - |{\cal D}|$.
\end{proof}

\subsection{Sufficient statistics for AR($p$) models with $p>1$} \label{Details_ARp}

Consider
\begin{align}       
&{\rm Pr} \left( Y=y \, \big| \, Y^0=y^0,A \right) \notag
\\
&=  \prod_{t=1}^T \, 
\left[ \frac {1} {1+\exp\left(\sum_{r=1}^p y_{t-r} \, \gamma_r + w_t' \, A\right) }  
\right]^{1-y_t}
\left[ \frac {\exp\left(\sum_{r=1}^p y_{t-r} \, \gamma_r + w_t' \, A\right)} {1+\exp\left(\sum_{r=1}^p y_{t-r} \, \gamma_r + w_t' \, A\right) }  
\right]^{y_t} . \label{ModelARp}
\end{align}
which extends \eqref{ModelAR2} to an autoregressive model of arbitrary order $p>1$. Let $y, \tilde{y}\in \{0,1\}^{T}$ denote two outcome paths. In this model, conditions (i) and (ii) of Lemma \ref{lemma:GeneralDynamicsSufficientStats} correspond to 
    \begin{itemize}
       \item[(i)]
    $\sum_{t=1}^T w_t y_t = \sum_{t=1}^T w_t \tilde{y}_t$.

    \item[(ii)]  
$\Big[ \big(w_t,\sum_{r=1}^p y_{t-r}\, \gamma_r\big) \, : \, t=2,\ldots,T\Big] $ 
is a permutation of  
$\Big[ \big(w_t,\sum_{r=1}^p y_{t-r}\, \gamma_r\big) \, : \, t=2,\ldots,T\Big]$
    \end{itemize}
Let $\mathcal{C}_{l,p}$ denote the set of all combinations of $l$-elements $\bold{i}=(i_1,\ldots,i_l)$ drawn from $\{1,\ldots,p\}$. A key observation is that condition (ii) means that  $y,\widetilde y^T$ produce the same proportions of pairs $\left\{\left(w_t,\sum_{j=1}^l \gamma_{i_j}\right)\right\}_{\bold{i}\in \mathcal{C}_{l,p}}$ for $l=1,\ldots,p$, $t=2,\ldots,T$. In view of \eqref{ModelARp}, this can only be the case if for each $\bold{i}\in \mathcal{C}_{l,p}$ and associated pair $\left(w_t,\sum_{j=1}^l \gamma_{i_j}\right)$, we have
\begin{align*}
    \sum_{t=2}^{T} w_t 
 \prod_{j=1}^l y_{t - i_j} 
 \prod_{\substack{k = 1 \\ k \notin \{i_1, \ldots, i_l\}}}^p (1 - y_{t - k}) &= \sum_{t=2}^{T} w_t 
 \prod_{j=1}^l \widetilde y_{t - i_j} 
 \prod_{\substack{k = 1 \\ k \notin \{i_1, \ldots, i_l\}}}^p (1 - \widetilde y_{t - k})
\end{align*}
These conditions are collectively equivalent to  
\begin{align} \label{SufficientStats_pureARp_part2}
    \sum_{t=2}^{T} w_t 
 \prod_{j=1}^l y_{t - i_j} 
  &= \sum_{t=2}^{T} w_t 
 \prod_{j=1}^l \widetilde y_{t - i_j}, \quad \forall \bold{i}\in \mathcal{C}_{l,p}, \quad l=1,\ldots,p
\end{align}
Lemma \ref{lemma:GeneralDynamicsSufficientStats} ensures that if (i) and \eqref{SufficientStats_pureARp_part2} are satisfied, the likelihood ratio 
$\frac{{\rm Pr}(Y=y | Y^0=y^0, A)}{{\rm Pr}(Y=\tilde y | Y^0=y^0, A)}$ is free from $A$.
For the special case $p=2$, we have $C_{1,2}=\{(1),(2)\}$ and $C_{2,2}=\{(1,2)\}$ and \eqref{SufficientStats_pureARp_part2} reduces to 
\begin{align} 
   \sum_{t=1}^T w_t \, y_{t-1} &= \sum_{t=1}^T w_t \, \tilde{y}_{t-1} \notag
   \\
   \sum_{t=1}^T w_t \, y_{t-2} &= \sum_{t=1}^T w_t \, \tilde{y}_{t-2} \notag
   \\
   \sum_{t=1}^T w_t \, y_{t-1} \, y_{t-2} &= \sum_{t=1}^T w_t \, \tilde y_{t-1} \, \tilde{y}_{t-2}. \notag
\end{align}
Together with (i), these restrictions coincide with \eqref{SufficientStats_pureAR2} from the main text. \\

To understand why the conditional likelihood approach fails to identify $\gamma_1$ in the AR(2),  recall that each weight vector $w_t = (w_{t,1}, \ldots, w_{t,d_w})$ satisfies
$w_{t,k} \in \{0,1\}$, for all $k \in \{1,\ldots,d_w\}$, and  $\sum_{k=1}^{d_w} w_{t,k} =1$. This structure implies that the conditions in \eqref{SufficientStats_pureAR2} yield the following equalities: (a) $\sum_{t=1}^T \, y_{t} = \sum_{t=1}^T \, \tilde{y}_{t}$ (b) $\sum_{t=1}^T \, y_{t-1} = \sum_{t=1}^T \, \tilde{y}_{t-1}$, (c) $\sum_{t=1}^T \, y_{t-2} = \sum_{t=1}^T \, \tilde{y}_{t-2}$ and (d) $\sum_{t=1}^T \, y_{t-1} \, y_{t-2} = \sum_{t=1}^T \, \tilde y_{t-1} \, \tilde{y}_{t-2}$. Furthermore, since the initial condition $y^0=(y_{0},y_{-1})\in\{0,1\}^2$ is held fixed across $y, \tilde{y}$, we  have (b) $\sum_{t=1}^{T-1} \, y_{t} = \sum_{t=1}^{T-1} \, \tilde{y}_{t}$, (c) $\sum_{t=1}^{T-2} \, y_{t} = \sum_{t=1}^{T-2} \, \tilde{y}_{t}$ and (d) $\sum_{t=1}^{T-1} \, y_{t} \, y_{t-1} = \sum_{t=1}^{T-1} \, \tilde y_{t} \, \tilde{y}_{t-1}$. From (b) and (c), we deduce that $y_{T-1}=\widetilde y_{T-1}$; then from (a) it follows that $y_{T}=\widetilde y_{T}$. Using these equalities in (d), we find: $\sum_{t=1}^{T} \, y_{t} \, y_{t-1} = \sum_{t=1}^{T} \, \tilde y_{t} \, \tilde{y}_{t-1}$. This last condition implies that the likelihood ratio given in Lemma \ref{lemma:GeneralDynamicsSufficientStats} simplifies to
\begin{align*}
    \frac{{\rm Pr}(Y=y | Y^0=y^0, A)}{{\rm Pr}(Y=\tilde y | Y^0=y^0, A)} = \exp\left\{  \gamma_2 \sum_{t=1}^{T}\left[ y_{t}y_{t-2} - \tilde y_{t}\tilde y_{t-2}\right] \right\},
\end{align*}
which is independent of $\gamma_1$. For the general AR($p$) case, one can extend this reasoning to show that the conditional likelihood approach fails to identify the first $(p-1)$ autoregressive coefficients $\gamma_1,\ldots,\gamma_{p-1}$ under conditions (i) and \eqref{SufficientStats_pureARp_part2}.

\subsection{Moment functions for the panel AR(1) with quarterly fixed effects and covariates} \label{moments_AR1_quarterlyFE}

The moment function $m_1$ presented in the main text in a very explicit form is the sum of 9 subcomponents:
\begin{align*}
   m_{1}(Y,Y_0,X,\theta) =  \sum_{y\in \{0,1\}^2} \varphi^{(a)}_{y}(Y,Y_0,X,\theta)+\varphi^{(b)}_{y}(Y,Y_0,X,\theta)-(1-Y_{1})
\end{align*}
where for $y=(y_3,y_4)\in \{0,1\}^2$,
\begin{align*}
    &\varphi^{(a)}_{y}(Y,Y_0,X,\theta)=\phi_{y}^{(a)}(Y,Y_0,X,\theta)-w_{y}(Y_0,X)Y_1\phi_{y}^{(a)}(Y,Y_0,X,\theta) \\
    &\phi_{y}^{(a)}(Y,Y_0,X,\theta) = (1-Y_2)\mathds{1}\{Y_3=y_3,Y_4=y_4\}(1-Y_5)e^{Y_6(\gamma Y_1+X_{26}'\beta)}  \\
     &\varphi^{(b)}_{y}(Y,Y_0,X,\theta)=\phi_{y}^{(b)}(Y,Y_0,X,\theta)-w_{y}(Y_0,X)Y_1\phi_{y}^{(b)}(Y,Y_0,X,\theta) \\
    &\phi_{y}^{(b)}(Y,Y_0,X,\theta) = Y_2\mathds{1}\{Y_3=y_3,Y_4=y_4\}(1-Y_5)e^{(1-Y_6)(-\gamma Y_1+X_{62}'\beta)}
\end{align*}
with $w_{y}(Y_0,X) = 1-e^{-\gamma Y_0 +\gamma y_{4}+X_{51}'\beta}$ and using the shorthand $X_{ts}=X_{t}-X_{s}$. The blueprint behind this construction is the following. For any $y=(y_3,y_4)$, it follows from the definition of $\phi_{y}^{(a)}(Y,Y_0,X,\theta)$ that
\begin{align*}
    \mathbb{E}\left[\phi^{(a)}_{y}(Y,Y_0,X,\theta)|Y_0,Y_1,X,A\right]&=\frac{1}{1+e^{\gamma Y_{1}+ X_{2}'\beta+A_2}}\frac{e^{ y_{3}(X_{3}'\beta+A_3)}}{1+e^{ X_{3}'\beta+A_3}}\frac{e^{y_4(\gamma y_{3}+X_{4}'\beta+A_4)}}{1+e^{\gamma y_{3}+X_{4}'\beta+A_4}}\frac{1}{1+e^{\gamma y_4+X_{5}'\beta+A_1}}\\
    &\left(\frac{1}{1+e^{X_{6}'\beta+A_2}}+e^{\gamma Y_1+X_{26}'\beta}\frac{e^{X_{6}'\beta+A_2}}{1+e^{X_{6}'\beta+A_2}}\right) \\
    &=\frac{e^{ y_{3}(X_{3}'\beta+A_3)}}{1+e^{ X_{3}'\beta+A_3}}\frac{e^{y_4(\gamma y_{3}+X_{4}'\beta+A_4)}}{1+e^{\gamma y_{3}+X_{4}'\beta+A_4}}\frac{1}{1+e^{\gamma y_4+X_{5}'\beta+A_1}}\frac{1}{1+e^{X_{6}'\beta+A_2}}
\end{align*}
Therefore, by the law of iterated expectations
\begin{align*}
    &\mathbb{E}\left[\varphi^{(a)}_{y}(Y,Y_0,X,\theta)|Y_0,X,A\right]\\
    &= \mathbb{E}\left[\phi_{y}^{(a)}(Y,Y_0,X,\theta)-w_{y}(Y_0,X)Y_1\phi_{y}^{(a)}(Y,Y_0,X,\theta)|Y_0,X,A\right] \\
    &=\mathbb{E}\left[\mathbb{E}\left[\phi_{y}^{(a)}(Y,Y_0,X,\theta)|Y_0,Y_1,X,A\right]|Y_0,X,A\right] \\
    &-w_{y}(Y_0,X)\mathbb{E}\left[Y_1\mathbb{E}\left[\phi_{y}^{(a)}(Y,Y_0,X,\theta)|Y_0,Y_1,X,A\right]|Y_0,X,A\right] \\
    &=\frac{e^{ y_{3}(X_{3}'\beta+A_3)}}{1+e^{ X_{3}'\beta+A_3}}\frac{e^{y_4(\gamma y_{3}+X_{4}'\beta+A_4)}}{1+e^{\gamma y_{3}+X_{4}'\beta+A_4}}\frac{1}{1+e^{\gamma y_4+X_{5}'\beta+A_1}}\frac{1}{1+e^{X_{6}'\beta+A_2}} \\
    &-\left(1-e^{-\gamma Y_0 +\gamma y_{4}+X_{51}'\beta}\right)\frac{e^{\gamma Y_0+X_{1}'\beta+A_1}}{1+e^{\gamma Y_0+X_{1}'\beta+A_1}}\frac{e^{ y_{3}(X_{3}'\beta+A_3)}}{1+e^{ X_{3}'\beta+A_3}}\frac{e^{y_4(\gamma y_{3}+X_{4}'\beta+A_4)}}{1+e^{\gamma y_{3}+X_{4}'\beta+A_4}}\frac{1}{1+e^{\gamma y_4+X_{5}'\beta+A_1}}\frac{1}{1+e^{X_{6}'\beta+A_2}} \\
    &=\left(\frac{1}{1+e^{\gamma y_4+X_{5}'\beta+A_1}}-\left(1-e^{-\gamma Y_0 +\gamma y_{4}+X_{51}'\beta}\right)\frac{e^{\gamma Y_0+X_{1}'\beta+A_1}}{1+e^{\gamma Y_0+X_{1}'\beta+A_1}}\frac{1}{1+e^{\gamma y_4+X_{5}'\beta+A_1}}\right) \\
    &\times \frac{e^{ y_{3}(X_{3}'\beta+A_3)}}{1+e^{ X_{3}'\beta+A_3}}\frac{e^{y_4(\gamma y_{3}+X_{4}'\beta+A_4)}}{1+e^{\gamma y_{3}+X_{4}'\beta+A_4}}\frac{1}{1+e^{X_{6}'\beta+A_2}} \\
    &=\frac{1}{1+e^{\gamma Y_0+X_{1}'\beta+A_1}}\frac{e^{ y_{3}(X_{3}'\beta+A_3)}}{1+e^{ X_{3}'\beta+A_3}}\frac{e^{y_4(\gamma y_{3}+X_{4}'\beta+A_4)}}{1+e^{\gamma y_{3}+X_{4}'\beta+A_4}}\frac{1}{1+e^{X_{6}'\beta+A_2}}
\end{align*}
where the last equality exploits the partial fraction decomposition identity (c.f Lemma 6 in \cite{dano2023transition})
\begin{align*} 
    \frac{1}{1+e^{v+a}}+(1-e^{u-v})\frac{e^{v+a}}{(1+e^{v+a})(1+e^{u+a})}=\frac{1}{1+e^{u+a}}
\end{align*}
Now, by summing over all possible combinations of $y=(y_3,y_4)$, we obtain:
\begin{align} \label{moment_quaterlyFE_parta}
    \mathbb{E}\left[\sum_{y\in \{0,1\}^2} \varphi^{(a)}_{y}(Y,Y_0,X,\theta)|Y_0,X,A\right]&=     \frac{1}{1+e^{\gamma Y_0+X_{1}'\beta+A_1}}\frac{1}{1+e^{X_{6}'\beta+A_2}}
\end{align}
Next, applying the same reasoning to $\phi^{(b)}_{y}(Y,Y_0,X,\theta)$ and $\varphi^{(b)}_{y}(Y,Y_0,X,\theta)$, we get
\begin{align*}
    \mathbb{E}\left[\phi^{(b)}_{y}(Y,Y_0,X,\theta)|Y_0,Y_1,X,A\right]
    &=\frac{e^{ y_{3}(\gamma+X_{3}'\beta+A_3)}}{1+e^{ \gamma+X_{3}'\beta+A_3}}\frac{e^{y_4(\gamma y_{3}+X_{4}'\beta+A_4)}}{1+e^{\gamma y_{3}+X_{4}'\beta+A_4}}\frac{1}{1+e^{\gamma y_4+X_{5}'\beta+A_1}}\frac{e^{X_{6}'\beta+A_2}}{1+e^{X_{6}'\beta+A_2}}
\end{align*}
and 
\begin{align*}
    \mathbb{E}\left[\varphi^{(b)}_{y}(Y,Y_0,X,\theta)|Y_0,X,A\right]
    &=\frac{1}{1+e^{\gamma Y_0+X_{1}'\beta+A_1}}\frac{e^{ y_{3}(\gamma+X_{3}'\beta+A_3)}}{1+e^{\gamma+X_{3}'\beta+A_3}}\frac{e^{y_4(\gamma y_{3}+X_{4}'\beta+A_4)}}{1+e^{\gamma y_{3}+X_{4}'\beta+A_4}}\frac{e^{X_{6}'\beta+A_2}}{1+e^{X_{6}'\beta+A_2}}
\end{align*}
implying in turn that
\begin{align} \label{moment_quaterlyFE_partb}
   \mathbb{E}\left[\sum_{y\in \{0,1\}^2} \varphi^{(b)}_{y}(Y,Y_0,X,\theta)|Y_0,X,A\right]&=     \frac{1}{1+e^{\gamma Y_0+X_{1}'\beta+A_1}}\frac{e^{X_{6}'\beta+A_2}}{1+e^{X_{6}'\beta+A_2}}
\end{align}
Adding up \eqref{moment_quaterlyFE_parta} and \eqref{moment_quaterlyFE_partb} finally yields
\begin{align*}
    \mathbb{E}\left[\sum_{y\in \{0,1\}^2} \varphi^{(a)}_{y}(Y,Y_0,X,\theta)+\varphi^{(b)}_{y}(Y,Y_0,X,\theta)|Y_0,X,A\right]&=     \frac{1}{1+e^{\gamma Y_0+X_{1}'\beta+A_1}} = \mathbb{E}\left[(1-Y_{1})|Y_0,X,A\right]
\end{align*}
whereupon $\mathbb{E}\left[m_{1}(Y,Y_0,X,\theta)|Y_0,X,A\right]=0$. To see that $m_{2}(Y,Y_0,X,\theta)=m_{1}(1-Y,-X;\theta)$ is also a valid moment function, it suffices to note that the model probabilities ${\rm Pr} \left( Y=y \, \big| \, Y^0,X,A \right)$ of the AR(1) model with quaterly fixed effects are invariant under the symmetry transformation:
\begin{align*}
    Y_t \leftrightarrow 1-Y_{t}, \quad X_{t}\leftrightarrow -X_{t}, \quad \theta \leftrightarrow \theta, \quad A_{k} \leftrightarrow -A_{k}-\gamma \quad \forall k \in \{1,\ldots,4\}
\end{align*}
This implies that
\begin{align*}
    \mathbb{E}\left[m_{2}(Y,Y_0,X,\theta)|Y_0=y_0,X=x,A=a\right]&=\mathbb{E}\left[m_{1}(1-Y,-x;\theta)|Y_0=y_0,X=x,A=a\right] \\
    &=\mathbb{E}\left[m_{1}(Y,Y_0,X,\theta)|Y_0=1-y_0,X=-x,A=-a-\gamma\right] \\
    &=0
\end{align*}

\end{document}